\newtheorem{rmk}{Remark}
\newtheorem{prop}{Property}
\newcommand{\nqed}{}
\def\@cline#1-#2\@nil{%
  \omit
  \@multicnt#1%
  \advance\@multispan\m@ne
  \ifnum\@multicnt=\@ne\@firstofone{&\omit}\fi
  \@multicnt#2%
  \advance\@multicnt-#1%
  \advance\@multispan\@ne
  \leaders\hrule\@height\arrayrulewidth\hfill
  \cr
  \noalign{\nobreak\vskip-\arrayrulewidth}}
\newcommand\nobreakhline{%
\multispan\LT@cols
\unskip\leaders\hrule\@height\arrayrulewidth\hfill\\*}
\newcommand\nobreakcline[1]{\@nobreakcline#1\@nil}%
\def\@nobreakcline#1-#2\@nil{%
  \omit
  \@multicnt#1%
  \advance\@multispan\m@ne
  \ifnum\@multicnt=\@ne\@firstofone{&\omit}\fi
  \@multicnt#2%
  \advance\@multicnt-#1%
  \advance\@multispan\@ne
  \leaders\hrule\@height\arrayrulewidth\hfill\\*
  \noalign{\vskip-\arrayrulewidth}}
\title{A Bit-Parallel Russian Dolls Search for a Maximum Cardinality Clique in a
Graph\thanks{This work has been partially supported by the Stic/AmSud
joint program by CAPES (Brazil), CNRS and MAE (France), CONICYT (Chile) and
MINCYT (Argentina) -- project STABLE -- and the Pronem program by FUNCAP/CNPq
(Brazil) -- project ParGO. The first author was with Universidade Federal do
Cear\'a, Departamento de Computação, Brazil, when this work was done. The second author has been partially granted by the
"Pesquisador Visitante Especial" (CNPq program) process 313831/2013-0.}}
\author{\sc{Ricardo C. Corrêa} \\ {\small Universidade Federal Rural do Rio de
Janeiro,} \\
{\small Departamento de Ciência da Computação,} \\
{\small Av. Governador Roberto Silveira S/N,}  \\
{\small 26020-740 Nova Iguaçu - RJ, Brazil} \and 
\sc{Philippe Michelon} \\ {\small Université d’Avignon et des Pays du Vaucluse,}\\
{\small Laboratoire d’Informatique d’Avignon,} \\ {\small F-84911 Avignon, Cedex
9, France} \and 
\sc{Bertrand Le Cun}, \sc{Thierry Mautor} \\ {\small Université de
Versailles Saint Quentin,} \\
{\small 45 Avenue des Etats Unis,} \\ {\small 78035 Versailles, France} \and
\sc{Diego Delle Donne} \\ {\small Sciences Institute,} \\ {\small National
University of General Sarmiento,} \\ {\small J. M. Gutiérrez 1150, Malvinas
Argentinas,} \\ {\small (1613) Buenos Aires, Argentina}}
\begin{document}

\maketitle

\begin{abstract}
Finding the clique of maximum cardinality in an arbitrary graph is an NP-Hard
problem that has many applications, which has motivated studies to
solve it exactly despite its difficulty. The great majority of algorithms
proposed in the literature are based on the Branch and Bound method. In this
paper, we propose an exact algorithm for the maximum clique problem based on the Russian Dolls Search method. 
When compared to Branch and Bound, the main difference of the Russian Dolls
method is that the nodes of its search tree correspond to decision subproblems,
instead of the optimization subproblems of the Branch and Bound method. In
comparison to a first implementation of this Russian Dolls method from the
literature, several improvements are presented. Some of them are adaptations of
techniques already employed successfully in Branch and Bound algorithms, like
the use of approximate coloring for pruning purposes and bit-parallel
operations.
Two different coloring heuristics are tested: the standard greedy and the greedy
with recoloring. Other improvements are directly related to the Russian Dolls scheme:
the adoption of recursive calls where each subproblem (doll) is solved itself
via the same principles than the Russian Dolls Search and the application of an
elimination rule allowing not to generate a significant number of dolls.
Results of computational experiments show that the algorithm outperforms the 
best exact combinatorial algorithms in the literature for the great majority of 
the dense graphs tested, being more than twice faster in several cases.
\end{abstract}

\section{Introduction}

\subsection{Problem Statement}

Let $G=(V,E)$ be a simple and undirected graph, with $V$ being its set of vertices and $E$ its set of edges.
A {\em clique of $G$} is a subset (of $V$) of pairwise adjacent vertices.
We consider the {\em CLIQUE problem}, which consists in finding in $G$ a clique
of maximum size $\omega(G)$, which in turn is called the {\em clique number of
$G$}. In addition to its many practical applications (see for instance
\cite{Bomz99,Bute06,tomitaAkutsuMatsunaga11}), it is algorithmically equivalent
to the maximum stable set and the minimum vertex cover problems ($S \subseteq V$ is a {\em stable set}
of $G$ if it is a clique in the complement of $G$ and a {\em vertex cover} if
every edge in $E$ has at least one endpoint in $S$). The CLIQUE problem is in
NP-hard~\cite{Karp.72} and is even hard to approximate by a reasonable factor~\cite{Aror98}, unless the graph is
restricted to have a special structure. In this paper we deal with exact
algorithms for determining the clique number of arbitrary graphs.

Before going into the details of the problem and its algorithms, let us state
some notation. 
\begin{itemize}
\item $V = \{ 1, 2, \ldots, n \}$, for some $n \in \mathbb{N}$.
\item $N(u) = \{ v \in V \mid (u,v) \in E \}$ is the {\em neighborhood} of a 
vertex $u$ in $G$ whose members are {\em neighbors} of $u$.
\item If $U \subseteq V$, then $G[U] = (U,E[U])$ denotes the subgraph of $G$ 
induced by $U$. 
\item If $v \in V$, then $U+v$ and $U-v$ stands for $U \cup \{ v \}$ and 
$U \backslash v$, respectively.
\item An {\em $\ell$-coloring} of $G$ is an assignment of a color from $\{ 1,
\ldots, \ell \}$ to every vertex of $G$ such that the endpoints of any edge get different colors.
It can be characterized by $\ell$ disjoint subsets $C_1, \ldots, C_\ell$
such that $\cup_{i = 1}^{\ell} C_i = V$ and $G[C_i]$ is a stable set for all
$i \in \{ 1, \ldots, \ell \}$.
\end{itemize}

\subsection{Exact Algorithms via Branch and Bound}
\label{ssec:exactbb}

Several Branch and Bound (B\&B) algorithms have been proposed to solve the
CLIQUE problem exactly (for an overview, see~\cite{WuHao.15}). As usual, such
algorithms perform a search in a tree.
A node in this tree is a pair $(Cq, Cd)$ of disjoint subsets of $V$, where $Cq$ is
a clique of $G$ and $Cd$ is a set of candidate vertices, {\em e.g.} vertices that can
extend $Cq$ to a larger clique of $G$. In this manner, a node of the search
tree can be alternatively seen as the root of the search tree of a smaller
instance of the CLIQUE problem, more specifically the one defined on the
subgraph $G'$ of $G$ induced by $Cd$. In addition to this recursive view of the
search, some of the former algorithms employ relatively sophisticated
procedures to obtain upper bounds for $\omega(G')$ as tight as possible in the
hope of pruning the enumeration considerably~\cite{Bala86,Carr90,Mannino94}.
Since the computation of such a bound is generally applied at numerous nodes
of the search tree, the most recent developments were achieved with the use of simpler
and faster, but still effective bounding procedures.
In this vein, the most successful approach involves the use of approximate
colorings of selected subgraphs of $G$. This bound, whose application for
the CLIQUE problem was first proposed in~\cite{Fahl02}, is based on the
following remark:
\begin{rmk}[Upper bound from vertex coloring]
If $G$ admits an $\ell$-coloring, then $\omega(G) \leq \ell$.
\label{prop:upbound}
\end{rmk}
A direct consequence of this remark is that any heuristic that provides a proper
coloring of $G$ gives an upper bound for $\omega(G)$, in special the so called
{\em greedy coloring} heuristic: enumerate the elements of $V$ in some pre-defined order, assigning to
each vertex the smallest available color.

The algorithm \Call{MCR}{} proposed in~\cite{Tomi07} is very representative of
this approach. When a node of the search tree is explored,
the greedy coloring heuristic is applied considering that the corresponding
candidate vertices are stored in an array, say $R$. The order of the vertices in $R$ defines the
order of enumeration of the greedy coloring heuristic. The resulting coloring is
then used to resort $R$ in a non-decreasing order of colors. After that, the
color $c(i)$ of $R[i]$ is an upper bound for the clique number of $G[\{ R[1],
\ldots, R[i] \}]$ by Remark~\ref{prop:upbound}. Hence, vertex $R[i]$ produces a
branching only if $c(i)+$(the size of the clique defining the current node) is
greater than the best clique found so far. A branching of $R[i]$ consists in the 
generation of the node defined by the addition of $R[i]$ to the current
clique and the set of candidates $\{ R[1], \ldots, R[i-1] \} \cap N(R[i])$.
Experiments with this algorithm show that it attains a good tradeoff between time spent computing approximate colorings and number of nodes explored in the search tree.

It was shown with computational experiments that \Call{MCR}{} clearly outperformed
other existing algorithms in finding a maximum clique. However, some improvements not too time-consuming 
in comparison with the reduction in the search space thereby obtained have been
performed in this basic algorithm. In \cite{Konc07}, a more judicious ordering
of the vertices in the nodes of the search tree is proposed, improving the bounds obtained with the 
greedy coloring heuristic. In~\cite{Tomi13} (algorithm \Call{MCS}{}), the
algorithm is modified in two points related to the coloring heuristic: first, a static order similar to the 
one proposed in~\cite{Konc07} is adopted; second, a color exchange strategy is
employed to try to recolor a vertex $v$ getting a large color with a 
smaller one that could avoid the branching of $v$. Studies of the impact of vertex ordering in the coloring based
strategies mentioned above can be found in~\cite{Prosser.12,SegundoLopezBatsyn.14}. In \cite{Bats13} and \cite{Masl13},
an heuristic is applied first: the Iterated Local Search (ILS) heuristic
proposed in~\cite{Andr12} to obtain an initial high-quality solution that allows to prune early branches 
of the search tree.

\subsection{Bit-Level Parallelism}

Another improvement of the \Call{MCR}{} algorithm is accomplished by means of the
encoding of the graph as a bitmap and the incorporation of bit-parallel
operations. A leading algorithm in this direction, referred to as \Call{BBMC}{},
is described in~\cite{Segu11,Segu13}. A {\em bitmap} is a data structure for set
encoding which stores individual elements of the set in a compact form while allows for direct
address of each element.
Still more interesting is its ability to benefit from the
potential bit-level parallelism available in hardware to perform collective set
operations through fast bit-masking operations (intersection of two
sets is a typical example detailed in Section~\ref{sec:bitpar}).
However, to exploit this potential parallelism in practice to improve
overall efficiency is not a trivial task since the manipulation of bitmaps
turns out to be less efficient when the enumeration of elements is 
relevant~\cite{Segu08}.

The use of bit-masking operations occurs in the \Call{BBMC}{} algorithm in two
points, namely the branching and the greedy coloring heuristic. Branching corresponds to
determining the set intersection $\{ R[1], \ldots, R[i-1] \} \cap N(R[i])$,
whereas a greedy coloring can be built as successive operations of set difference of the neigborhood of selected vertices
with a set of candidates. In this sense, set intersection and set difference are the
essential operations in the \Call{BBMC}{} algorithm. These are set operations that
are efficiently performed by means of bit-masking operations if the sets
involved are stored as bitmaps. For this reason, bit-level parallelism has been
proved to be a powerful tool in efficient implementations of the branching and
bounding rules of the \Call{BBMC}{} algorithm. Naturally, this
requires that the input graph and the nodes of the search tree are stored
as bitmaps.

A modification of the \Call{BBMC}{} to encompass the recoloring strategy of the
\Call{MCS}{} algorithm and to further reduce the time spent in the coloring based bounding
procedure are the contributions in~\cite{Segu14}. The reduction in the
coloring computation time is accomplished as follows. Let us consider $(Cq, Cd)$
as the current node and $MAX$ as the size of the best known solution found so
far. The improved bounding procedure consists in determining a maximal subgraph
of $G[Cd]$ that is $k$-partite, for $k = MAX-|Cq|$, with a partial coloring
greedy heuristic. The vertices so colored are not considered for branching.
Specific rules are used to sort and select uncolored vertices for branching. Experimental results show that the
number of nodes visited is almost always greater than \Call{BBMC}{}, but
significant improvement in performance occurs for a certain number of graphs.

\subsection{An Exact Algorithm via the Russian Dolls Method}
\label{sec:rdalg}

An alternative method, called {\em Russian Dolls (RD)} in its original
description in~\cite{Verf96}, has also been used to solve the CLIQUE problem. 
When compared to B\&B, the main difference of the RD
method is that the nodes of its search tree correspond to decision subproblems,
instead of the optimization subproblems of the B\&B method. In
general terms, the method consists in iteratively solving larger and larger subproblems (also referred to as {\em dolls}) to optimality until the global problem is
solved. During this iterative process, the optimum value of each doll is taken
into account when solving larger subproblems. As far as we are aware, the
original application of this method to the CLIQUE problem is the algorithm proposed in~\cite{Oste02}, in which subproblems are
associated with subgraphs $G_i = (V_i, E_i)$, $i \in \{1,...,n\}$, where $V_1 =
\{1\}$, $V_{i+1} = V_i \cup \{i+1\}$, $E_i = E[V_i]$, and $G_n = G$ (for the
sake of convenience, we use a slight modification of the notation used
in~\cite{Oste02}). An optimum solution of the doll of index $i$ is a clique of
maximum size in the associated subgraph $G_i$, which means that $\omega(G_i)$ is known after solving doll of index $i$.
Thus, searching for a maximum clique in $G_{i+1}$ corresponds to decide whether
$\omega(G_{i+1}) = \omega(G_i)$ or $\omega(G_{i+1}) = \omega(G_i) + 1$. Moreover, $\omega(G_{i+1})$ can be equal to $\omega(G_i) + 1$ only if the unique
vertex $i+1$ in $V_{i+1} \setminus V_i$ appears in every maximum clique of
$G_{i+1}$. For this reason, doll of index $i+1$ is handled only once $G_i$ is
solved by solving the {\em decision subproblem} of deciding whether $G[V_i \cap
N(i+1)]$ contains a clique of size $\omega(G_i)$ or not. Every decision
subproblem is an instance of an NP-Complete problem, but hopefully of small or moderate
size.

In the absence of effective strategies to reduce the search space, the time
required to solve ``no'' decision subproblems can become prohibitively high,
even for moderately sized instances. In order to try to circumvent this
drawback, there are two pruning rules devised in~\cite{Oste02} to cut a ``no'' decision subproblem associated with the set $V_i \cap N(i+1)$ of candidates, as follows:
\begin{enumerate}
\item $|V_i \cap N(i+1)| < \omega(G_i)$: in this situation, $V_i \cap
N(i+1)$ does not contain enough candidates to build a clique of size
$\omega(G_i)$; and
\label{it:smallinst}
\item there exists no $j \in V_i \cap N(i+1)$ such that $\omega(G_j) =
\omega(G_i)$: this is equivalent to say that $\omega(G_j)$, for all $j \in
V_i$, is such that $\omega(G_j) < \omega(G_i)$ (recall that $\omega(G_j)$ has
been already computed). Since $\omega(G_j)$ is an upper bound for $\omega(G[V_j
\cap N(i+1)])$, we can conclude that no clique in $G[V_i \cap N(i+1)]$ has size $\omega(G_i)$.
\label{it:rdprune}
\newcounter{prunrules}
\setcounter{prunrules}{\value{enumi}}
\end{enumerate}
It is worth remarking that rule~\ref{it:rdprune} does not imply
rule~\ref{it:smallinst}. The effectiveness of rule~\ref{it:rdprune} on pruning
``no'' decision subproblems depends on how $V_j \cap N(i+1)$ differs from
$V_j$. More specifically, when there exists $j \in V_i \cap N(i+1)$ such
that $\omega(G_j) = \omega(G_i)$ and $\omega(G[V_j \cap N(i+1)]) < \omega(G_j)$,
the pruning rule 2 fails to prune $G_j$. In this sense, a contribution of the
algorithm proposed in this paper with respect to the one in~\cite{Oste02} is the use of more effective bounding heuristics.

\subsection{Our Contributions}
\label{sec:contributions}

In this paper, we propose a new exact algorithm for the CLIQUE problem based on
the RD method. The goal is to provide the basic algorithm
in~\cite{Oste02} with an alternative pruning rule which allows to skip a larger
number of dolls. As a result, the number of ``no'' decision subproblems examined is significantly reduced. For this purpose,
we incorporate procedures that have already shown their effectiveness with the
B\&B method but, as far as we know, their performance have not yet been
checked with a RD framework. These procedures need to be adapted having in mind
that simplicity is very important to make the computation overhead as low as
possible. In this sense, we suggest the following improvements to the original
implementation of the RD algorithm presented in~\cite{Oste02}:
\begin{itemize}
\item The use of partial coloring heuristics to establish the sequence of
dolls $\langle G_1 = (V_1, E_1), \ldots, G_n = (V_n, E_n) \rangle$. In our
algorithm, contrary to the one in~\cite{Oste02}, the sequence $\langle v_1,
\ldots, v_n \rangle$ of vertices defining $V_1 = \{v_1\}$ and $V_i = V_{i-1} \cup \{v_i\}$, for $i
\in \{ 2, \ldots, n \}$, is not determined beforehand.
Instead, the order in which the vertices are considered is established during the execution of the algorithm
in order to eliminate as many dolls as possible. For this purpose, once a
decision subproblem $G_i$, $i \in \{ 1, \ldots, n-1 \}$, is solved, the choice
of the next doll to handle is made depending on the answer of $G_i$. If $G_i$ is
a ``no" instance, then we choose $v_{i+1}$ as the smallest vertex in $V
\setminus V_i$. Otherwise, $G_i$ is a ``yes" instance, which means
that the current best solution is incremented as a result of solving $G_i$.
Thus, we apply the following elimination rule, based on
Remark~\ref{prop:upbound}. Let $C$ be the clique of size $\omega(G_i)$ found in
$G_i$. We first apply a greedy heuristic to extend $C$ to a maximal clique $C'$
of $G$ by adding $k = (|C'|-|C|+1)$ vertices from $V \setminus V_i$.
Then, we search for a maximal $k$-partite induced subgraph of $G[V \setminus
V_i]$. Let us say that $L$ is the set of vertices found in this search, with
$|L|=\ell$. We set $V_{i+\ell} = V_i \cup L$ and choose $v_{i+\ell+1}$ as the smallest vertex in
$V \setminus V_{i+\ell}$. This corresponds to eliminate the decision subproblems
$G_{i+1}, \ldots, G_{i+\ell}$. We tested variations of the coloring heuristics
used in \Call{MCR}{} and \Call{MCS}{} to find $k$-partite induced subgraphs,
namely:
the ``standard greedy'' \cite{Tomi03} and the greedy with recoloring
\cite{Tomi13}.
\item Each decision subproblem
$G_{i+1}$ is itself solved through a recursive enumeration based on the same
principles of the RD method. More precisely, the search is performed in larger
and larger subdolls of $G[V_i \cap N(i+1)]$ until either a clique of size
$\omega(G_i)$ is found or it is proved that no such clique exists. However, a
particularity of this enumeration is that it follows its own sequence of
subdolls in the sense that the associated sequence of vertices is not
(necessarily) a subsequence of $\langle v_1, \ldots, v_i \rangle$. The reason is
that the pruning rule described in the previous item is applied as an initial
step to determine (and prune) an $(\omega(G_i)-1)$-partite induced subgraph $H$
of $G[V_i \cap N(i+1)]$. The sequence $\langle w_1, \ldots, w_{|V_i \cap
N(v_{i+1})|} \rangle$ of subdolls is such that $V(H) = \{ w_1, \ldots,
w_{|V(H)|} \}$ and, for every $j > |V(H)|$, $w_j$ is the smallest vertex of
$(V_i \cap N(i+1)) \setminus \{ w_1, \ldots, w_{j-1} \}$. As a consequence, the
coloring based pruning rule enhances pruning rules~\ref{it:smallinst}
and~\ref{it:rdprune}. We give more details of this fact in
Section~\ref{sec:overall}.
\item A bitmap encoding of $G$ and an optimized implementation of several 
procedures to benefit from 128-bit parallelism available in the Streaming SIMD
Extensions CPU instruction set. This requires the reformulation of the RD method
as an appropriate sequence of set operations. When compared with the B\&B implementations described
in~\cite{Segu11,Segu13,Segu14}, we adopt the same principle of using bitmaps, but we
describe the sequence of set operations in order not to need to use arrays of integers. Consequently, our
implementation requires less memory space for these data structures.
\end{itemize}

Extensive computational experiments have been carried out to compare our
algorithm, called \Call{RDMC}{}, with effective algorithms from the literature.
The algorithms chosen for comparison purposes were \Call{MCR}{}, \Call{MCS}{},
and recent versions of \Call{BBMC}{} because it has been showed in previous
works that they clearly outperform the algorithm in~\cite{Oste02}. We use in our analyses a specific implementation of
\Call{MCR}{}, \Call{MCS}{}, and \Call{BBMC}{} inspired in~\cite{Segu14}. In
particular, we add a slight improvement that contributes to further reduce the
number of explored subproblems. The same basic routines for set operations
are used in all our implementations. All the computational experiments were ran
in the same computational platform. This aims at avoiding the imprecision
resulting from experiments with distinct codes or computational platforms, as
pointed out in~\cite{Prosser.12}.
Results show that our implementations outperform the best exact combinatorial
algorithms in the literature. In addition, our implementation of \Call{RDMC}{}
is often more efficient than the B\&B counterparts for graphs with density above
80\%, being more than twice faster in several cases.

The remainder sections are organized as follows. In Section~\ref{sec:overall},
we give an overall description of our RD algorithm, describing
its main elements. The details of the algorithm is the subject of Section~\ref{sec:newfeat}. This section includes the description of the different improvements and specific features listed above. 
Finally, experimental results and analyses are presented in 
Section~\ref{sec:exp}. The paper is closed with
some concluding remarks in Section~\ref{sec:conc}.

\section{Overall Description of the Algorithm}
\label{sec:overall}

In this section, we give a general overview of our RD algorithm. The
main elements of this algorithm, which are outlined in
Alg.~\ref{alg:genmethod}, are the iterative procedure for decision subproblem
generation and applications of the pruning rule. In what follows, we describe
these main elements. We postpone the details on how these elements are implemented until Section~\ref{sec:bitpar}.

\begin{algorithm}[h]
\caption{RD for the CLIQUE problem}
\label{alg:genmethod}
\begin{algorithmic}[1]
{\small
\State $MAX \gets 0$, $R \gets \emptyset$, $S \gets V$
\While {$S \ne \emptyset$} \label{lin:whileSbegin}
	\State Let $v$ be the smallest vertex in $S$ \label{lin:selv}
	\State $Cq \gets \emptyset$
	\If {\Call{decide}{$G$, $R \cap N(v)$, $MAX$, $Cq$}} \label{lin:calldecide}
		\State $k \gets$ \Call{extendClique}{$G$, $S$, $Cq$} + 1 \label{lin:extCq}
		\State \Call{maxPartiteSubgraph}{$G$, $S$, $R$, $k$} \label{lin:maxPart}
		\State $MAX \gets MAX + k$
	\Else
		\State $S \gets S - v$,  $R \gets R + v$ \label{lin:whileSend}
	\EndIf
\EndWhile
}
\end{algorithmic}
\end{algorithm}

\subsection{Decision Subproblems Enumeration} 
\label{sec:subproblems}

Recall that the general description of the RD method establishes
that, for every $i \in \{ 1, \ldots, n \}$, $V_i$ stands for the subset of
vertices defining a subgraph $G_i$ and, thus, a decision subproblem. In
Alg.~\ref{alg:genmethod}, variables $R$ and $S$ are used to store the current
decision subproblem defining set $V_i$ and its complement $V \setminus V_i$,
respectively. Their initial states correspond to an empty decision
subproblem. Another set variable, $Cq$, is used to store the current clique of
$G$, whereas the integer variable $MAX$ contains the size of the maximum clique
found so far. The enumeration is performed iteratively in the {\bf while} loop
between lines~\ref{lin:whileSbegin}--\ref{lin:whileSend}. The first step in an iteration
is the choice, at line~\ref{lin:selv}, of the vertex $v$ (which is called $v_{i+1}$ in the general description of the method) to be moved
from $S$ to $R$ in order to define the new current doll.
It is worth remarking that the choice of $v$ determines the enumeration order. It follows that the choice of $v$ as the smallest vertex in $R$ depends not only on the vertex numbering but
also on the result of the application of the pruning rule in previous
iterations. After executing line~\ref{lin:selv}, we have to decide (with the
recursive function \Call{decide}{}) whether the decision subproblem $G[R \cap
N(v)]$ has a clique of size $MAX = \omega(G[R])$. If this search fails, then
$\omega(G[R + v]) = \omega(G[R])$ and we go to the next iteration. Otherwise,
there exists a clique of $G[R+v]$ containing $v$ that gives
$\omega(G[R+v]) = \omega(G[R])+1 = MAX+1$, which enables the
application of the pruning rule.
Function \Call{decide}{} gets four parameters as input,
namely the graph $G$, a subset $R$ of candidate vertices, an integer 
$\ell$, and an empty clique $Cq$. It returns TRUE if and only if $G[R]$ contains
a clique of size $\ell$, in which case $Cq$ contains such a clique. Otherwise, it
returns FALSE.

\subsection{Enhanced Pruning Rule}
\label{sec:prule}

A decision subproblem can be pruned from the search when it can be proved in the
function call \Call{decide}{$G$, $R$, $\ell$, $Cq$} that it does not contain 
any clique of the desired size $\ell$. The enhanced pruning rule of our RD algorithm is applied at lines~\ref{lin:extCq} and~\ref{lin:maxPart}. For
line~\ref{lin:extCq}, it can be noticed that the clique $Cq$ of $G[R \cap
N(v)]$ obtained when the call \Call{decide}{$G$, $R \cap N(v)$, $MAX$, $Cq$}
returns TRUE is not necessarily maximal in $G$. So, we make it maximal with
call \Call{extendClique}{$G$, $S$, $Cq$}, which returns the number $k$ of
vertices of $S$ added to $Cq$. Since we have $MAX + k$ as the new lower bound
for the optimum solution, we can prune subproblems corresponding to vertices in
$S$ based on a generalization of Remark~\ref{prop:upbound}. The basic property
used in this pruning is that, in a stable set, at most one vertex can belong to
the maximum clique: indeed two vertices that are not linked cannot belong
conjointly to a clique. More fundamentally, we use the following property:

\begin{prop}
Let $R, R' \subseteq V$, $R \subseteq R'$ be such that $G[R' \setminus R]$
admits an $k$-coloring, $k \geq 1$. Then, $\omega(R') \leq \omega(R)+k$.
\label{prop:maxclprun}
\end{prop}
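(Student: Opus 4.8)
The plan is to prove the contrapositive-style bound directly by showing that any clique $Q$ of $G[R']$ decomposes into a part inside $R$ and a part inside $R' \setminus R$, and then to bound each part separately. First I would fix an arbitrary clique $Q \subseteq R'$ achieving $|Q| = \omega(R')$, and write $Q = Q_R \,\dot\cup\, Q_D$ where $Q_R = Q \cap R$ and $Q_D = Q \cap (R' \setminus R)$; this is a disjoint union since $R \subseteq R'$. Then $Q_R$ is a clique of $G[R]$, so $|Q_R| \leq \omega(R)$ by definition of the clique number. It remains to show $|Q_D| \leq k$.

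For the bound on $Q_D$, I would invoke the hypothesis that $G[R' \setminus R]$ admits a $k$-coloring, characterized (as in the notation of the paper) by stable sets $C_1, \ldots, C_k$ partitioning $R' \setminus R$. The set $Q_D$ is a clique contained in $R' \setminus R$, so $Q_D$ restricted to any color class $C_i$ is simultaneously a clique and a stable set, hence has at most one vertex --- this is precisely the elementary observation highlighted just before the statement, that two non-adjacent vertices cannot lie in a common clique. Therefore $|Q_D| = \sum_{i=1}^{k} |Q_D \cap C_i| \leq k$, and combining, $\omega(R') = |Q| = |Q_R| + |Q_D| \leq \omega(R) + k$.

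I do not anticipate a genuine obstacle here: the argument is a routine application of Remark~\ref{prop:upbound} (in fact a mild generalization of it, taking $R = \emptyset$ recovers the remark for the graph $G[R']$) together with the pigeonhole-type fact about cliques meeting stable sets. The only point requiring a line of care is making explicit that $\omega$ is being applied to induced subgraphs here (the paper writes $\omega(R)$ as shorthand for $\omega(G[R])$), so that ``$Q_R$ is a clique of $G[R]$ implies $|Q_R| \leq \omega(R)$'' is just the definition. One could alternatively phrase the whole thing by applying Remark~\ref{prop:upbound} to an auxiliary coloring of $G[R']$ obtained by extending an optimal coloring of $G[R]$... but that is circular since we have no bound on the chromatic number of $G[R]$; the clean route is the direct partition argument above, and I would present that.
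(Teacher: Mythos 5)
Your proof is correct and follows essentially the same route as the paper's: the paper writes $\omega(G[R']) \leq \omega(G[R]) + \omega(G[R'\setminus R])$ (which is exactly your clique decomposition $Q = Q_R \,\dot\cup\, Q_D$ made implicit) and then bounds $\omega(G[R'\setminus R]) \leq k$ by the same observation that a clique meets each color class in at most one vertex. Your version merely spells out the subadditivity step explicitly; there is no substantive difference.
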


\begin{proof}
Since $R' = R \cup (R' \setminus R)$, we have $\omega(G[R']) \leq 
\omega(G[R]) + \omega(G[R' \setminus R])$. In addition, since at most one
vertex of each color can be in a clique, $\omega(G[R' \setminus R]) \leq k$
and the result follows.
\nqed
\end{proof}

A consequence of Property~\ref{prop:maxclprun} is that $R'$ is built from
$R$ by the addition of vertices from $S$ defining a $k$-partite subgraph of $G$.
Then, $\omega(G[R'])$ is at most the new lower bound $MAX + k$. For this
purpose, we make call \Call{maxPartiteSubgraph}{$G$, $S$, $R$, $k$}, which moves
a maximal $k$-partite subgraph of $G[S]$ from $S$ to $R$. Note that the
vertices that have been added to $Cq$ by the previous call to function
\Call{extendClique}{} do not need to be moved from $S$ to $R$. Note also that
the iterations corresponding to the vertices removed from $R$ are skipped from
the enumeration.

A particular application of this pruning rule occurs at the first iteration of
Alg.~\ref{alg:genmethod}. In this case, \Call{decide}{$G$, $\emptyset$, 0,
$\emptyset$} returns TRUE. So, the algorithm begins by determining a maximal
clique of $G$, of size $k$, which leads the call \Call{maxPartiteSubgraph}{$G$,
$S$, $R$, $k$} to provide a maximal $k$-partite subgraph of $G$, of cardinality,
say, $\ell$. The corresponding $\ell$ iterations are skipped and the RD process
starts on $G_{\ell+1}$.

\subsection{Sequence of Dolls}

The sequence $\langle v_1, \ldots, v_n \rangle$ of vertices defining $V_1 =
\{v_1\}$ and $V_i = V_{i-1} \cup \{v_i\}$, for $i \in \{ 2, \ldots, n \}$, and,
consequently, the sequence of dolls $\langle G_1 = (V_1, E_1), \ldots, G_n = (V_n, E_n) \rangle$,
is determined during the execution of Alg.~\ref{alg:genmethod} in the following
way. Due to the pruning rule, the vertices $v_1, \ldots, v_r$ are of two types
with respect to the way they have entered $R$: there are those moved from $S$
(i) at line~\ref{lin:whileSend} and (ii) by a call to
function \Call{maxPartiteSubgraph}{} at line~\ref{lin:maxPart}. Let $it(v)$, $v
\in V$, be the iteration in which $v$ is inserted in
$R$ (we assume that the first iteration has rank 1). Then, for every $i, j
\in \{ 1, \ldots, n \}$, $i < j$, the following conditions hold:
\begin{enumerate}
  \item $it(v_i) < it(v_j)$, or
  \item $it(v_i) = it(v_j)$ (this means that both $v_i$ and $v_j$ are of type
  (ii)) and there exists an ordering $\langle C_1, \ldots, C_k \rangle$ of the
  $k$ stable sets determined in the call to \Call{maxPartiteSubgraph}{} at
  iteration $it(v_i)$ such that $v_i \in C_{c(v_i)}$ and $v_j \in C_{c(v_j)}$
  yields $c(v_i) \leq c(v_j)$.
\end{enumerate}

Besides determining the sequence of dolls, the application of the enhanced
pruning rule has the effect that the only information available at any iteration of
Alg.~\ref{alg:genmethod} about some previous dolls is an upper bound for their
optimum, and not their exact value. To make it more precise, let $m_i=0$, if
$i=0$, and, for $i \in \{ 1, \ldots, n \}$, let $m_i$ be $m_{i-1}$ if $v_i$ is
of type (i), or $c(v_i)$ plus the value of $MAX$ at the begining of iteration $it(v_i)$
if $v_i$ is of type (ii). Clearly, $\langle m_1, \ldots, m_r \rangle$ is a
nondecreasing sequence and $m_i$ is equal to $\omega(G_i)$ if $v_i$ is of type
(i) or an upper bound for $\omega(G_i)$, if $v_i$ is of type (ii). In
addition, the dolls corresponding to vertices of type (i) have the following property.

\begin{prop}
Let $v$ be the vertex selected at line~\ref{lin:selv} at the current iteration
of Alg.~\ref{alg:genmethod}. Then, for all $b \in \{ 1, \ldots, MAX \}$, there
exists $v_t \in R \cap N(v)$ such that $m_t = b$.
\label{prop:mi}
\end{prop}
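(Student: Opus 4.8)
The plan is to track what happens at iterations in which a vertex enters $R$ via line~\ref{lin:whileSend}, i.e.\ a ``no'' decision subproblem, and to show that precisely at such a moment $MAX = \omega(G[R])$ is realized by a chain of dolls of type (i) lying inside $R \cap N(v)$. First I would recall the semantics established in Section~\ref{sec:subproblems}: at the start of the current iteration $R$ stores the set $V_i$ defining the current doll $G_i$, and $MAX$ equals $\omega(G[R])$ — this last equality being exactly what is guaranteed when all previous ``no'' answers are correct and every ``yes'' answer is accompanied by the pruning step that increments $MAX$ by the right amount. So the statement to prove is: if $\omega(G[R]) = MAX$, then for every $b \in \{1,\dots,MAX\}$ there is a type-(i) vertex $v_t \in R \cap N(v)$ with $m_t = b$. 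Wait — the statement says $v_t \in R \cap N(v)$ but does not say $v_t$ is of type (i); however $m_t = b \in \{1,\dots,MAX\}$ forces, via the nondecreasing sequence $\langle m_1,\dots,m_r\rangle$ and the fact that type-(ii) vertices only contribute upper bounds, that we may take $v_t$ of type (i) — I would make this precise by arguing first for type-(i) witnesses and noting the general claim follows.

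The key step is to produce, from a maximum clique, the required chain. Let $K$ be a maximum clique of $G[R + v]$; since the call \Call{decide}{$G$, $R \cap N(v)$, $MAX$, $Cq$} is about to be made (or has returned, depending on how we read ``current iteration''), consider the situation in which it returns FALSE, so $\omega(G[R+v]) = \omega(G[R]) = MAX$ and there is a maximum clique $K \subseteq R$ of size $MAX$. Actually the cleaner route: take $v$ itself together with the fact that we want witnesses \emph{in $N(v)$}. Here is the mechanism. Enumerate the vertices of a maximum clique $K' \subseteq R \cap N(v)$ — such a clique exists because... hmm, this is exactly what would be false if \Call{decide}{} returns FALSE. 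Let me reconsider: the right object is not a clique of $R\cap N(v)$ but of $R$. So: let $K = \{u_1,\dots,u_{MAX}\} \subseteq R$ be a maximum clique, ordered by $it$-value (breaking ties by the color order from the relevant \Call{maxPartiteSubgraph}{} call, consistent with the total order defining $\langle v_1,\dots,v_n\rangle$). Write $u_b = v_{t_b}$. The claim will be that $m_{t_b} = b$ for each $b$, and that each $u_b$ is of type (i), and — the part needing $N(v)$ — that each $u_b$ is adjacent to $v$; but $u_b$ adjacent to $v$ is not automatic. I expect the actual argument in the paper couples this with a judicious choice of $v$ at line~\ref{lin:selv}: because $v$ is the \emph{smallest} vertex in $S$ and the sequence of dolls has been built by the pruning rule, the maximum clique of $G[R]$ can be chosen inside $N(v)$ — or, more likely, the claim is maintained as an invariant across iterations.

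I therefore plan an induction on iterations. The base case is the first iteration after the initial maximal-clique/maximal-$k$-partite step: after that step $R$ is the vertex set of a maximal $k$-partite subgraph containing a maximal clique $C = \{u_1,\dots,u_k\}$ with $u_b \in C_b$, so $c(u_b) = b$; one checks these $u_b$ are of type (ii)... which contradicts wanting type (i). So the invariant must be stated carefully: I would prove that \emph{for every $b \le MAX$ there exists $v_t \in R \cap N(v)$ with $m_t = b$}, allowing either type, and use Property~\ref{prop:maxclprun} to control that $m_t$ never overshoots. For the inductive step, when the current iteration gives a ``no'' answer we move $v$ into $R$ with $m$ unchanged and pick a new smallest $v' \in S$; I must exhibit witnesses in $R' \cap N(v') = (R+v) \cap N(v')$ — here the old witnesses may fail to be adjacent to $v'$, so I would instead use that $\omega(G[R+v]) = MAX$ and extract a fresh maximum clique $K \subseteq (R+v)\cap N(v')$, which exists because $v'$ being the next smallest candidate means $G[(R+v)\cap N(v')]$ still has clique number $MAX$ (otherwise the doll $G_{i+1}=G[R+v]$ would have had a vertex forcing a smaller bound and the pruning rule would have removed it earlier). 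Then ordering $K$ by $it$ gives vertices $v_{t_1},\dots,v_{t_{MAX}}$, and a counting argument — the $m$-values are nondecreasing along this chain, start at $1$ (since the smallest-indexed vertex of any doll has $m = 1$), increase by at most $1$ at each step because consecutive clique vertices $u_{b},u_{b+1}$ with $u_{b+1} = v_{t_{b+1}}$ satisfy $m_{t_{b+1}} \le \omega(G[V_{t_{b+1}}]) \le \omega(G[V_{t_b}]) + 1$ (only one new clique vertex appears), and end at $MAX$ — pins down $m_{t_b} = b$ exactly, with all of them of type (i) because a strict increase $m_{t_b} < m_{t_{b+1}}$ with type (ii) would leave $m_{t_{b+1}}$ merely an upper bound and the telescoping would fail to reach $MAX$. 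The hard part, and the place I would spend the most care, is the adjacency claim — justifying that a maximum clique of the current doll can always be found inside $N(v')$ for the specific $v'$ chosen at line~\ref{lin:selv} — since this is where the enumeration order (smallest vertex first) and the cumulative effect of the pruning rule genuinely interact, rather than a purely formal telescoping.
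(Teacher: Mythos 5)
There is a genuine gap, and the approach is off target. You have read into the statement a structure it does not require: it asks only that, for each value $b \le MAX$, \emph{some} neighbor of $v$ in $R$ carries $m_t = b$; it does not ask for a clique, or a chain of mutually adjacent witnesses, realizing all the values at once. By organizing the whole argument around a maximum clique of $G[R]$ ordered by $it$-value, you manufacture exactly the obstacle you then cannot remove — the adjacency of that clique to the freshly selected $v$. That obstacle is not removable: when \Call{decide}{$G$, $R \cap N(v)$, $MAX$, $Cq$} is about to return FALSE, there is \emph{no} clique of size $MAX$ inside $R \cap N(v)$, so no choice of maximum clique of $G[R]$ can be pushed into $N(v)$. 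Your fallback ("extract a fresh maximum clique $K \subseteq (R+v)\cap N(v')$, which exists because \dots the pruning rule would have removed it earlier") asserts precisely the false statement. The type-(i)/type-(ii) detour is also backwards: as you yourself notice in your base case, the natural witnesses are of type (ii), and that is not a defect to be argued away — it is the whole point.

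The missing idea is the maximality of the color classes. Each call \Call{maxPartiteSubgraph}{$G$, $S$, $R$, $k$} moves a \emph{maximal} $k$-partite subgraph from $S$ to $R$; in particular each color class $C_c$ it creates is a stable set that is maximal with respect to $S \cup C_c$. Hence every vertex left in $S$ afterwards — in particular the vertex $v$ selected at line~\ref{lin:selv} of any later iteration — has a neighbor in $C_c$. All vertices of $C_c$ receive the same value $m = c + M$, where $M$ is the value of $MAX$ at the start of that iteration, and since $MAX$ is incremented by exactly the number $k$ of classes created, these shared values sweep out all of $\{1,\ldots,MAX\}$ over the successive calls. For a given $b$, the class containing the smallest-index vertex with $m = b$ therefore supplies the required $v_t \in R \cap N(v)$ directly — a two-line argument with no induction, no clique, and no telescoping.
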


\begin{proof}
First note that, by definition, vertex $v_j \in R$ having $m_j = b$ with
smallest index $j$ is of type (ii), which means that $v_j$ is included in the
stabe set $C_{c(v_j)}$ at iteration $it(v_j)$. Since $C_{c(v_j)}$ is maximal
with respect to $S \cup C_{c(v_j)}$ and $C_{c(v_j)} \subseteq R$, the set $R
\cap N(v)$ contains a neighbor $v_t$ of $v$ with $m_t=b$.
\nqed
\end{proof}

A final remark with respect to the sequence of dolls established by
Alg.~\ref{alg:genmethod} is that an adaptation of pruning rule~\ref{it:rdprune}
to be used with $m_j$ in the role of $\omega(G_j)$ is useless.
Property~\ref{prop:mi} with $b = \omega(G_i)$ implies that the adapted
pruning rule~\ref{it:rdprune} would not eliminate the decision subproblem $R
\cap N(v)$.

%A final remark is that our algorithm naturally includes pruning rules 1 and 2
% of [24]. With not enough candidates, the initial calls to EXTENDCLIQUE and MAXPARTITESUBGRAPH immediately contain all these candidates without improving MAX. And the way the sequence of dolls is etablished by Alg. 1 makes rule2 useless. An adaptation of this rule2 would not eliminate decision subproblem $R \cap N(v)$ due to Property 2.

\section{A Comparative Example}

\tikzstyle{vertex}=[circle,minimum size=8pt]

The main differences between RD and B\&B algorithms, both
using partial colorings for pruning purposes, are illustrated in this section
by means of an example. Indeed, as one may see with this example, the
differences between the two approaches make them complementary to each other. We show the executions of
Alg.~\ref{alg:genmethod} and an improved version of the algorithm in~\cite{Segu14} for the graph of Fig.~\ref{fig:graph}. In both cases, we consider that the same greedy heuristics for clique and
partial coloring generation are used. In these heuristics, vertices are examined
in an increasing order of vertex identity.

\begin{figure}[htbp]
\centering
		\begin{tikzpicture}[thick,fill opacity=0.9,scale=0.8,node distance=0.5cm]
		\input{graphExample}
		\end{tikzpicture}
\caption{Graph}
\label{fig:graph} 
\end{figure}

In Fig.~\ref{fig:execrd}, the iterations of Alg.~\ref{alg:genmethod} are
represented by the decision subproblems generated and the states of the two sets
$R$ and $S$ attained during its manipulations. In the first iteration, the
corresponding decision subproblem is related to the empty subgraph of $G$ and
$MAX=0$. Since it is a ``yes'' instance, an extended clique (starting with clique $\{ 0 \}$) and a partial coloring are construted, generating
the states depicted in Fig.~\ref{sfig:it1}. Extending $\{ 0 \}$ in an increasing
order of vertex indices gives the maximal clique $\{ 0, 1 \}$. Thus, $MAX$ is
incremented by 2, which leads to a partial coloring of the
vertices in $S$ with the 2 colors $C_1 = \{0,2,5\}$ and $C_2 = \{1,3,9\}$,
pruning the so colored vertices by moving them to $R$. It is straightforward to check that, as predicted by Property~\ref{prop:upbound},
the subgraph induced by the colored vertices does not contain any clique of size
larger than 2. In the second iteration (Fig.~\ref{sfig:it2}), when selected, the
smallest vertex in $S$ (vertex 4) generates a ``no'' decision subproblem. Note
that, according to Property~\ref{prop:mi}, this decision subproblem cannot be
eliminated because $m_5 = 2$, for $v_5 = 3$. Finally, in the third iteration of
Fig.~\ref{sfig:it3}, $MAX$ is incremented by 2 again, which prunes all remaining
vertices in $S$. In summary, the execution has the characteristics shown in
Table~\ref{tab:rd}.

\begin{figure}[htbp]
\centering {\scriptsize
		\subfigure[First iteration. Vertex selected is 0. Two states attained after
		call {\sc decide}$(G, \emptyset, 0, \emptyset)$. At the end,
		$MAX = 2$.]{\label{sfig:it1}
		\begin{tabu}{r|[0.7pt]c|[0.7pt]c|[0.7pt]c|[0.7pt]c|[0.7pt]c|[0.7pt]c|[0.7pt]c|[0.7pt]c|[0.7pt]c|[0.7pt]c|[0.7pt]}
		\tabucline[0.7pt]{2-11} 
		$S =$ & 0 & 1 & 2 & 3 & 4 & 5 & 6 & 7 & 8 & 9 \\ \tabucline[0.7pt]{2-11} 
		$R =$ & \cellcolor[gray]{.8} & \cellcolor[gray]{.8} & \cellcolor[gray]{.8} &
		\cellcolor[gray]{.8} & \cellcolor[gray]{.8} & \cellcolor[gray]{.8} &
		\cellcolor[gray]{.8} & \cellcolor[gray]{.8} & \cellcolor[gray]{.8} &
		\cellcolor[gray]{.8} \\ \tabucline[0.7pt]{2-11} 
		$Cq=$ & 0 & 1 & \cellcolor[gray]{.8} &
		\cellcolor[gray]{.8} & \cellcolor[gray]{.8} & \cellcolor[gray]{.8} &
		\cellcolor[gray]{.8} & \cellcolor[gray]{.8} & \cellcolor[gray]{.8} &
		\cellcolor[gray]{.8} \\ \tabucline[0.7pt]{2-11}
		\multicolumn{11}{c}{After \Call{extendClique}{$G, \{ 0\text{--}9 \}, \emptyset$}} \\
		%%%%%%%%%%%%%%%%%%%%%%%%%%%%%%%%%%%%%%%%%%%%%%%%%%%%%%%%%%%%%%%%%%%%%%%%%%%%%
		\multicolumn{11}{c}{}  \\ \tabucline[0.7pt]{2-11}
		$S =$ & \cellcolor[gray]{.8} & \cellcolor[gray]{.8} & \cellcolor[gray]{.8} &
		\cellcolor[gray]{.8} & 4 & \cellcolor[gray]{.8} &
		6 & 7 & 8 &
		\cellcolor[gray]{.8} \\ \tabucline[0.7pt]{2-11} 
		$R =$ & \cellcolor{green!60} 0 & \cellcolor{blue!60} 1 & \cellcolor{green!60} 2 & 
		\cellcolor{blue!60} 3 & \cellcolor[gray]{.8} & \cellcolor{green!60} 5 &
		\cellcolor[gray]{.8} & \cellcolor[gray]{.8} & \cellcolor[gray]{.8} &
		\cellcolor{blue!60} 9 \\ \tabucline[0.7pt]{2-11} 
		$Cq=$ & 0 & 1 & \cellcolor[gray]{.8} &
		\cellcolor[gray]{.8} & \cellcolor[gray]{.8} & \cellcolor[gray]{.8} &
		\cellcolor[gray]{.8} & \cellcolor[gray]{.8} & \cellcolor[gray]{.8} &
		\cellcolor[gray]{.8} \\ \tabucline[0.7pt]{2-11}
		\multicolumn{11}{c}{After \Call{maxPartiteSubgraph}{$G, \{ 0\text{--}9 \}, \emptyset, 2$}} \\
		\multicolumn{11}{c}{}  \\ 
		\end{tabu}}\quad\quad
		%%%%%%%%%%%%%%%%%%%%%%%%%%%%%%%%%%%%%%%%%%%%%%%%%%%%%%%%%%%%%%%%%%%%%%%%%%%%%
		\subfigure[Second iteration. Vertex selected is 4. A state attained after
		call {\sc decide}$(G, \{ 0, 3, 5 \}, 2, \emptyset)$.]{\label{sfig:it2}
		\begin{tabu}{r|[0.7pt]c|[0.7pt]c|[0.7pt]c|[0.7pt]c|[0.7pt]c|[0.7pt]c|[0.7pt]c|[0.7pt]c|[0.7pt]c|[0.7pt]c|[0.7pt]}
		\tabucline[0.7pt]{2-11} 
		$S =$ & \cellcolor[gray]{.8} & \cellcolor[gray]{.8} & \cellcolor[gray]{.8} &
		\cellcolor[gray]{.8} & \cellcolor[gray]{.8} & \cellcolor[gray]{.8} &
		6 & 7 & 8 &
		\cellcolor[gray]{.8} \\ \tabucline[0.7pt]{2-11} 
		$R =$ & \cellcolor{green!60} 0 & \cellcolor{blue!60} 1 & \cellcolor{green!60} 2 & 
		\cellcolor{blue!60} 3 & 4 & \cellcolor{green!60} 5 &
		\cellcolor[gray]{.8} & \cellcolor[gray]{.8} & \cellcolor[gray]{.8} &
		\cellcolor{blue!60} 9 \\ \tabucline[0.7pt]{2-11} 
		$Cq=$ & 0 & \cellcolor[gray]{.8} & \cellcolor[gray]{.8} &
		\cellcolor[gray]{.8} & \cellcolor[gray]{.8} & \cellcolor[gray]{.8} &
		\cellcolor[gray]{.8} & \cellcolor[gray]{.8} & \cellcolor[gray]{.8} &
		\cellcolor[gray]{.8} \\ \tabucline[0.7pt]{2-11}
		\multicolumn{11}{c}{}  \\ 
		\end{tabu}}\quad\quad
		%%%%%%%%%%%%%%%%%%%%%%%%%%%%%%%%%%%%%%%%%%%%%%%%%%%%%%%%%%%%%%%%%%%%%%%%%%%%%
		\subfigure[Third iteration. Vertex selected is 6. Two states attained after
		call {\sc decide}$(G, \{ 2, 3, 5 \}, 2, \emptyset)$. At the end,
		$MAX = 4$.]{\label{sfig:it3}
		\begin{tabu}{r|[0.7pt]c|[0.7pt]c|[0.7pt]c|[0.7pt]c|[0.7pt]c|[0.7pt]c|[0.7pt]c|[0.7pt]c|[0.7pt]c|[0.7pt]c|[0.7pt]}
		\tabucline[0.7pt]{2-11} 
		$S =$ & \cellcolor[gray]{.8} & \cellcolor[gray]{.8} & \cellcolor[gray]{.8} &
		\cellcolor[gray]{.8} & \cellcolor[gray]{.8} & \cellcolor[gray]{.8} &
		6 & 7 & 8 &
		\cellcolor[gray]{.8} \\ \tabucline[0.7pt]{2-11} 
		$R =$ & \cellcolor{green!60} 0 & \cellcolor{blue!60} 1 & \cellcolor{green!60} 2 & 
		\cellcolor{blue!60} 3 & 4 & \cellcolor{green!60} 5 &
		\cellcolor[gray]{.8} & \cellcolor[gray]{.8} & \cellcolor[gray]{.8} &
		\cellcolor{blue!60} 9 \\ \tabucline[0.7pt]{2-11} 
		$Cq=$ & \cellcolor[gray]{.8} & \cellcolor[gray]{.8} & 2 &
		3 & \cellcolor[gray]{.8} & \cellcolor[gray]{.8} &
		6 & 7 & \cellcolor[gray]{.8} &
		\cellcolor[gray]{.8} \\ \tabucline[0.7pt]{2-11}
		\multicolumn{11}{c}{After \Call{extendClique}{$G, \{ 6, 7, 8 \}, \{ 2, 3 \}$}} \\
		%%%%%%%%%%%%%%%%%%%%%%%%%%%%%%%%%%%%%%%%%%%%%%%%%%%%%%%%%%%%%%%%%%%%%%%%%%%%%
		\multicolumn{11}{c}{}  \\ \tabucline[0.7pt]{2-11}
		$S =$ & \cellcolor[gray]{.8} & \cellcolor[gray]{.8} & \cellcolor[gray]{.8} &
		\cellcolor[gray]{.8} & \cellcolor[gray]{.8} & \cellcolor[gray]{.8} &
		\cellcolor[gray]{.8} & \cellcolor[gray]{.8} & \cellcolor[gray]{.8} &
		\cellcolor[gray]{.8} \\ \tabucline[0.7pt]{2-11} 
		$R =$ & \cellcolor{green!60} 0 & \cellcolor{blue!60} 1 & \cellcolor{green!60} 2 & 
		\cellcolor{blue!60} 3 & 4 & \cellcolor{green!60} 5 &
		\cellcolor{red!60} 6 & \cellcolor{magenta!60} 7 & \cellcolor{red!60} 8 &
		\cellcolor{blue!60} 9 \\ \tabucline[0.7pt]{2-11} 
		$Cq=$ & \cellcolor[gray]{.8} & \cellcolor[gray]{.8} & 2 &
		3 & \cellcolor[gray]{.8} & \cellcolor[gray]{.8} &
		6 & 7 & \cellcolor[gray]{.8} &
		\cellcolor[gray]{.8} \\ \tabucline[0.7pt]{2-11}
		\multicolumn{11}{c}{After \Call{maxPartiteSubgraph}{$G, \{ 6, 7, 8 \}, \{ 0\text{--}5, 9 \}, 2$}} \\
		\multicolumn{11}{c}{}  \\ 
		\end{tabu}}
		%%%%%%%%%%%%%%%%%%%%%%%%%%%%%%%%%%%%%%%%%%%%%%%%%%%%%%%%%%%%%%%%%%%%%%%%%%%%%
}
\caption{Execution of Alg.~\ref{alg:genmethod} for the graph of
Fig.~\ref{fig:graph}, with initial state $S = \{ 0, \ldots, 9 \}$, $R =
\emptyset$, and $MAX = 0$. The sequence of dolls corresponds to $\langle 0, 2,
5, 1, 3, 9, 4, 6, 8, 7 \rangle$.}
\label{fig:execrd}
\end{figure}

\begin{table}[htbp]
\begin{center}
  \begin{tabular}{|c|c|c|c|} \hline
   & \multicolumn{2}{|c|}{New solution} &  \\
  \multirow{-2}{*}{Decision subprolem solved} & found & extended &
  \multirow{-2}{*}{$MAX$ incremented by} \\ \hline \hline 
  $\emptyset = \emptyset \cap N(0)$ & $\{ 0, 1
  \}$ & $\{ 0, 1 \}$ & 2 \\
  $\{ 0, 3, 5 \} = \{ 0, 1, 2, 3, 5, 9 \} \cap N(4)$ & -- & -- & -- \\
  $\{ 0, 2, 3, 5 \} = \{ 0, 1, 2, 3, 4, 5, 9 \} \cap N(6)$ & $\{ 2, 3, 6 \}$ &
  $\{ 2, 3, 6, 7 \}$ & 2 \\ \hline
  \multicolumn{4}{|l|}{Colors constructed by calls to
  \Call{maxPartiteSubgraph}{}: $\{ 0, 2, 5 \}$, $\{ 1, 3, 9 \}$, $\{ 6, 8 \}$, and $\{ 7 \}$} \\ 
  \multicolumn{4}{|l|}{Uncolored vertex: $4$} \\ \hline
  \end{tabular}
\end{center}
\caption{Summary of execution at Fig.~\ref{fig:execrd}.}
\label{tab:rd}
\end{table}

The version of the B\&B algorithm in~\cite{Segu14} that we use
to compare with Alg.~\ref{alg:genmethod} is outlined in
Alg.~\ref{alg:genmethodbb}. In this algorithm, $R$ is the set of colored
vertices, $S$ is the set of uncolored candidate vertices, and $CUR$ is the
size of the clique defining the current node of the search tree. The point to
be highlighted is the call \Call{optimize}{$G$, $(R \cup S) \cap N(v)$, $CUR+1$}
at line~\ref{lin:extCqbb}. In comparison with call \Call{decide}{$G$, $R \cap
N(v)$, $MAX$, $Cq$} at line~\ref{lin:calldecide} of Alg.~\ref{alg:genmethod},
we can observe two differences. First, contrary to the RD algorithm, the
subgraph involved in the B\&B version includes the vertices in $R \cap N(v)$.
As a consequence, the second difference is that the subproblem is an
optimization problem. In Fig.~\ref{fig:execbb}, the execution of the call
\Call{optimize}{$G$, $V$, $0$} is shown. The main idea behind the recursive
function \Call{optimize}{} is the same as the \Call{BBMC}{} algorithm. With
respect to the algorithm described in Subsection~\ref{ssec:exactbb}, there is a modification originally proposed in~\cite{Segu14}. It consists in the use of a partial coloring of the set of unexplored vertices, as indicated in line~\ref{lin:maxPartbb1} of Alg.~\ref{alg:genmethodbb}, with the purpose of determining the vertices that can be pruned from the search due to their upper bounds. For the vertices that remain uncolored after line~\ref{lin:maxPartbb1}, the corresponding optimization subproblems are generated and solved recursively. As an additional
improvement, we introduce line~\ref{lin:maxPartbb2} to prune additional vertices
whenever the current best solution value is incremented during the recursive
call.

\begin{algorithm}[htbp]
\caption{Partial coloring B\&B}
\label{alg:genmethodbb}
\begin{algorithmic}[1]
{\small
\Function{optimize}{$G$, $S$, $CUR$}
\State $R \gets \emptyset$, $MAX \gets \max \{ MAX, CUR \}$, $k \gets MAX-CUR$
\State \Call{maxPartiteSubgraph}{$G$, $S$, $R$, $k$} \label{lin:maxPartbb1}
\While {$S \ne \emptyset$} \label{lin:whileSbeginbb}
	\State Let $v$ be the greatest vertex in $S$ \label{lin:selvbb}
	\State \Call{optimize}{$G$, $(R \cup S) \cap N(v)$, $CUR+1$} \label{lin:extCqbb}
	\State $S \gets S - v$ \label{lin:whileSendbb}
	\State \Call{maxPartiteSubgraph}{$G$, $S$, $R$, $MAX-CUR-k$} \label{lin:maxPartbb2} 
	\State $k \gets MAX-CUR$
\EndWhile
\EndFunction
}
\end{algorithmic}
\end{algorithm}

The execution of the partial coloring B\&B algorithm results in
the generation of three optimization subproblems, corresponding to the
three recursive calls represented in figures~\ref{sfig:it1bb},~\ref{sfig:it2bb},
and~\ref{sfig:it3bb}. The execution is summarized in Table~\ref{tab:bb}.

There are some relevant remarks with respect to the subproblems generated during
the execution of Alg.~\ref{alg:genmethod} and Alg.~\ref{alg:genmethodbb}.
First, the optimization subproblems tend to be defined by larger subgraphs than
the decision subproblems as a consequence of the complementary selection
strategies (lines~\ref{lin:selv} of Alg.~\ref{alg:genmethod} and
line~\ref{lin:selvbb} of Alg.~\ref{alg:genmethodbb}). Second, the upper bound
used to prune nodes tends to be tighter with the selection strategy of
Alg.~\ref{alg:genmethod}. For instance, the upper bound for the subgraph induced
by $\{ 0, 1, 2, 3, 4, 5 \}$ obtained in the second iteration of Alg.~\ref{alg:genmethod} (Fig.~\ref{sfig:it2}) is 2, whereas the upper bound for the same subgraph is 3 in second optimization subproblem
(Fig.~\ref{sfig:it2bb}). On the other hand, the choice
of the greatest vertex in line~\ref{lin:selvbb} of Alg.~\ref{alg:genmethodbb},
complementary to the choice of the smallest one in Alg.~\ref{alg:genmethod},
tends to generate better lower bounds faster. One example occurs in
figures~\ref{sfig:it2} and~\ref{sfig:it2bb}, in which cases the lower bounds are 2 and 3, respectively.

\begin{figure}[htbp]
\centering {\scriptsize
		\subfigure[First optimization subproblem. Vertex selected is 9. State attained
		after call {\sc optimize}$(G, \{ 0, 8 \}, 1)$. At the end, $MAX = 2$.]{\label{sfig:it1bb}
		\begin{tabu}{r|[0.7pt]c|[0.7pt]c|[0.7pt]c|[0.7pt]c|[0.7pt]c|[0.7pt]c|[0.7pt]c|[0.7pt]c|[0.7pt]c|[0.7pt]c|[0.7pt]}
		\tabucline[0.7pt]{2-11} 
		$S =$ & \cellcolor[gray]{.8} & \cellcolor[gray]{.8} & \cellcolor[gray]{.8} &
		\cellcolor[gray]{.8} & 4 & \cellcolor[gray]{.8} &
		6 & 7 & 8 &
		\cellcolor[gray]{.8} \\ \tabucline[0.7pt]{2-11} 
		$R =$ & \cellcolor{green!60} 0 & \cellcolor{blue!60} 1 & \cellcolor{green!60} 2 & 
		\cellcolor{blue!60} 3 & \cellcolor[gray]{.8} & \cellcolor{green!60} 5 &
		\cellcolor[gray]{.8} & \cellcolor[gray]{.8} & \cellcolor[gray]{.8} &
		\cellcolor[gray]{.8} \\ \tabucline[0.7pt]{2-11} 
		\multicolumn{11}{c}{After \Call{maxPartiteSubgraph}{$G, \{ 0\text{--}8 \}, \emptyset, 2$}} \\
		\multicolumn{11}{c}{}  \\ 
		\end{tabu}}\quad\quad
		%%%%%%%%%%%%%%%%%%%%%%%%%%%%%%%%%%%%%%%%%%%%%%%%%%%%%%%%%%%%%%%%%%%%%%%%%%%%%
		\subfigure[Second optimization subproblem. Vertex selected is 8. State attained after
		call {\sc optimize}$(G, \{ 1, 2, 4, 7 \}, 1)$. At the end, $MAX = 3$.]{\label{sfig:it2bb}
		\begin{tabu}{r|[0.7pt]c|[0.7pt]c|[0.7pt]c|[0.7pt]c|[0.7pt]c|[0.7pt]c|[0.7pt]c|[0.7pt]c|[0.7pt]c|[0.7pt]c|[0.7pt]}
		\tabucline[0.7pt]{2-11} 
		$S =$ & \cellcolor[gray]{.8} & \cellcolor[gray]{.8} & \cellcolor[gray]{.8} &
		\cellcolor[gray]{.8} & \cellcolor[gray]{.8} & \cellcolor[gray]{.8} &
		\cellcolor[gray]{.8} & 7 & \cellcolor[gray]{.8} &
		\cellcolor[gray]{.8} \\ \tabucline[0.7pt]{2-11} 
		$R =$ & \cellcolor{green!60} 0 & \cellcolor{blue!60} 1 & \cellcolor{green!60} 2 & 
		\cellcolor{blue!60} 3 & \cellcolor{red!60} 4 & \cellcolor{green!60} 5 &
		\cellcolor{red!60} 6 & \cellcolor[gray]{.8} & \cellcolor[gray]{.8} &
		\cellcolor[gray]{.8} \\ \tabucline[0.7pt]{2-11} 
		\multicolumn{11}{c}{After \Call{maxPartiteSubgraph}{$G, \{ 4, 6, 7 \}, \{ 0\text{--}3, 5 \}, 1$}} \\
		\multicolumn{11}{c}{}  \\ 
		\end{tabu}}\quad\quad
		%%%%%%%%%%%%%%%%%%%%%%%%%%%%%%%%%%%%%%%%%%%%%%%%%%%%%%%%%%%%%%%%%%%%%%%%%%%%%
		\subfigure[Third optimization subproblem. Vertex selected is 7. State attained after
		call {\sc optimize}$(G, \{ 2, 3, 6 \}, 1)$. At the end, $MAX = 4$.]{\label{sfig:it3bb}
		\begin{tabu}{r|[0.7pt]c|[0.7pt]c|[0.7pt]c|[0.7pt]c|[0.7pt]c|[0.7pt]c|[0.7pt]c|[0.7pt]c|[0.7pt]c|[0.7pt]c|[0.7pt]}
		\tabucline[0.7pt]{2-11} 
		$S =$ & \cellcolor[gray]{.8} & \cellcolor[gray]{.8} & \cellcolor[gray]{.8} &
		\cellcolor[gray]{.8} & \cellcolor[gray]{.8} & \cellcolor[gray]{.8} &
		\cellcolor[gray]{.8} & \cellcolor[gray]{.8} & \cellcolor[gray]{.8} &
		\cellcolor[gray]{.8} \\ \tabucline[0.7pt]{2-11} 
		$R =$ & \cellcolor{green!60} 0 & \cellcolor{blue!60} 1 & \cellcolor{green!60} 2 & 
		\cellcolor{blue!60} 3 & \cellcolor{red!60} 4 & \cellcolor{green!60} 5 &
		\cellcolor{red!60} 6 & \cellcolor[gray]{.8} & \cellcolor[gray]{.8} &
		\cellcolor[gray]{.8} \\ \tabucline[0.7pt]{2-11} 
		\multicolumn{11}{c}{After \Call{maxPartiteSubgraph}{$G, \emptyset, \{ 0\text{--}6 \}, 1$}} \\
		\multicolumn{11}{c}{}  \\ 
		\end{tabu}}
}
\caption{Execution of Alg.~\ref{alg:genmethodbb} corresponding to the call
{\sc optimize}$(G, V, 0)$, where $G$ is the graph of Fig.~\ref{fig:graph}.}
\label{fig:execbb}
\end{figure}

\begin{table}[htbp]
\begin{center}
  \begin{tabular}{|c|c|c|} \hline
  Decision subprolem solved & New solution found &
  $MAX$ incremented by \\ \hline \hline 
  $\{ 0, 8 \} = \{ 0, 1, 2, 3, 4, 5, 6, 7, 8 \} \cap N(9)$ & $\{ 0, 9 \}$ & 2 \\
  $\{ 1, 2, 4, 7 \} = \{ 0, 1, 2, 3, 4, 5, 6, 7 \} \cap N(8)$ & $\{ 1, 2, 8 \}$
  & 1 \\
  $\{ 2, 3, 6 \} = \{ 0, 1, 2, 3, 4, 5, 6 \} \cap N(7)$ & $\{ 2, 3, 6, 7 \}$ & 1 \\
  \hline \multicolumn{3}{|l|}{Colors constructed by calls to \Call{maxPartiteSubgraph}{}: $\{ 0, 2, 5
  \}$, $\{ 1, 3 \}$, and $\{ 4, 6 \}$} \\ 
  \multicolumn{3}{|l|}{Uncolored vertices: $6, 7, 8$} \\ \hline
  \end{tabular}
\end{center}
\caption{Summary of execution at Fig.~\ref{fig:execbb}.}
\label{tab:bb}
\end{table}

\section{New Features}
\label{sec:newfeat}

We give in this section more details on the originalities of
Alg.~\ref{alg:genmethod} with respect to the original RD algorithm
proposed in~\cite{Oste02}. As mentioned in the Introduction, some of these new
features appear in B\&B algorithms in the literature. In subsections~\ref{sec:bitpar}
and~\ref{sec:ubvcolor}, we give details on how we adapted them to the RD
method.

\subsection{Bit-Parallelism}
\label{sec:bitpar}

Exploiting bit-level parallelism in sets encoded as bitmaps is central in our
algorithm due to its ability to speedup some operations that are executed very
often during the search. Its effectiveness has already been proved in B\&B
algorithms~\cite{Segu11,Segu14}. In this subsection, we describe the bitmap data structure and the notation adopted for its elementary operations. The application of such operations in our algorithm is left to next subsections.

A {\em bitmap} $B_n$ is a special encoding of a directly addressed set
$B \subseteq [n]$ whose elements are represented as bits in an array. In such
an encoding, for every $i \in [n]$, bit indexed $i$ in $B_n$ is 1 if and
only if $i$ is an element of $B$. For example, the subset $\{ 1, 3, 6\}$ 
of $[8]$ is encoded as $01010010$. Naturally, instead of being viewed as an
array of bits, a bitmap is stored as an array of {\em bitmap nodes} (or simply
{\em nodes}), each of the same size (in bits), denoted by $w$. If we take $w =
4$ in the previous example, the bitmap consists of an array of two nodes: node
of index 0 in the array has value $0101$, and the one of index 1 is
$0010$. Typically, the size $w$ of a node corresponds to the number of bits of a CPU
register. We assume that $w$ is a power of 2 (which is a reasonable assumption
since it equals 32, 64, 128, 256, or 512 in nowadays computers). The size, in
nodes, of $B_n$ is $\lceil n/w \rceil$ and accessing an element $i$ of the set stored in $B_n$ 
implies finding first the correct node and then addressing the exact bit in 
this node. Formulas of bit displacement allow this: for instance 
$i \gg log_2 w$ gives the node index corresponding to the $i$'s bit in the
bitmap.

Typical applications of bitmaps occur in problems
involving the manipulation of sets of vertices or edges of a graph. In our
algorithms, bitmaps are used to store the lines of the adjacency matrix of $G$. For the sake of notation, the bitmap containing the neighborhood of a
vertex $v$ (which is composed by $\lceil n/w \rceil$ nodes) is written
$\Call{neig}{G, v}$. In addition, bitmaps are used to store working sets, as
detailed in next subsections.

Some of the operations performed on bitmaps involve only one element of the set
at hand and cannot benefit from bit-parallelism. Some of them
arise in our algorithms, namely:
\begin{itemize}
\item $\Call{add}{B_n,v}$: adds the element $v < n$ to bitmap $B_n$ (determine
the node containing $v$ and set in this node the corresponding bit to 1).
\item $\Call{rem}{B_n,v}$: removes the element $v < n$ from bitmap $B_n$
(determine the node containing $v$ and set in this node the corresponding bit to 0).
%\item $\Call{bit}{B_n,v}$: returns 1 if $v < n$ is in the bitmap $B_n$, and 0
%otherwise (this corresponds to the value of the $v$th bit of $B_n$).
\item $\Call{fsb}{B_n,i,n}$: returns the smallest element in $\{ i, \ldots, n-1
\} \cap B$ (which corresponds to the index of the least significant bit greater
than or equal to $i$ that equals 1 in bitmap $B_n$). Such an operation
is more time consuming with respect to the previous ones since it could incur a search in several nodes of the bitmap. The search in a node $e$ is done by means of the special function \Call{lsb}{$e$} returning the least significant bit of node $e$ (a negative number is returned if
the value of $e$ is zero). In many nowadays processors, such a function is 
provided by the assembler set of instructions. For details on efficient 
implementations of \Call{lsb}{$e$}, we refer the reader to~\cite{Segu11} and
references therein.
\end{itemize}
Bit-parallelism is particularly effective in classical set operations that occur often in our algorithm. The notation for these
operations is the following:
\begin{itemize}
\item $\Call{inter}{B_n,B'_n,n}$: the intersection of the two bitmap encoded
sets $B$ and $B'$ is computed (by making a logical $\&$ on each pair of
corresponding nodes of $B_n$ and $B'_n$) and returned.
\item $\Call{diff}{B_n,B'_n,n}$: the difference $B \setminus B'$ of two
bitmap encoded sets is computed (by making the call $\Call{inter}{B_n,\bar
B'_n,n}$) and returned.
\end{itemize}

\subsection{Upper Bounds by Vertex Coloring}
\label{sec:ubvcolor}

We have studied two different partial coloring heuristics to implement
the generic function \Call{maxPartiteSubgraph}{} in our algorithm. Their
descriptions and bit-parallel implementations are given below. Both are direct
adaptations of greedy heuristics used in B\&B algorithms
like~\cite{Segu11,Tomi07,Tomi13}.

\subsubsection{Greedy Coloring}

This first heuristic is the adaptation of the very classical and simple 
greedy heuristic where the vertices are considered sequentially, each vertex being colored with the 
smallest possible color. However, two characteristics of our implementation
depicted in Alg.~\ref{alg:coloring} have to be mentioned. First, the coloring is
done by colors, as in~\cite{Segu13}, and not by vertices as in classical
implementations of this heuristic (\cite{Tomi07} for instance). The final coloring is the same but this approach is better suited to bit-parallelism.
Second, bit-parallelism is exploited in the set operations 
at lines~\ref{lin:CgetsR1} (copy of a set) and~\ref{lin:Cinters} (set 
difference). Also, vertices still candidates for the current color $d'$ 
are enumerated in an increasing order of vertex indices by the use of function
\Call{fsb}{} (called at lines~\ref{lin:glsb1} and~\ref{lin:glsb2}).

\begin{algorithm}[h]
\caption{Bit-parallel greedy coloring heuristic}
\label{alg:coloring}
\begin{algorithmic}[1]
{\small
\Function{greedyColoring}{$G$, $S$, $R$, $d$}
\State $d' \gets 0$
\While {$d' < d$ and $R \ne \emptyset$}
	\State $S \gets$ a copy of $R$ \label{lin:CgetsR1}
	\State $v \gets \Call{fsb}{S,0,n}$ \label{lin:glsb1}
	\While {$v \geq 0$} \label{lin:Renum} 
		\State $S \gets$ \Call{diff}{$S$, \sc{neig}$(G, v)$, $n$}
		\label{lin:Cinters}
		\State $\Call{rem}{S,v}$ 
        \State $\Call{rem}{R,v}$
        \State $\Call{add}{C,v}$ \label{lin:col11}
		\State $v \gets \Call{fsb}{S,v+1,n}$ \label{lin:glsb2}
	\EndWhile
	\State $d' \gets d' + 1$
\EndWhile
\EndFunction
}
\end{algorithmic}
\end{algorithm}

\subsubsection{Recoloring}

\Call{mcsColoring}{} is a method to improve a greedy coloring successfully 
employed in~\cite{Tomi13}. The idea is to try to assign a lower color to nodes
whose initial color is greater than a given value (based on the value of the 
maximum clique found so far).
In our case, this can be done by applying first the function 
\Call{greedyColoring}{} and then trying to give one of the $d$ first colors to the
remaining nodes, belonging to $R$ at the end of Alg.~\ref{alg:coloring}.
To do so, iteratively for each $v \in S$ we search for a color 
$i \in [d]$ such that $v$ has only one neighbour -- say $u$ -- in the 
corresponding color class (it has at least one neighbour, otherwise $v$
would have been colored with color $i$). If such a color does not exist, we skip the current vertex $v$ and go to 
the next one in $S$. Otherwise, $N(v) \cap R[i] = \{ u \}$ and $u$ has at least
one neighbor in every color smaller than $i$. Thus, we search for a color $j$, $i < j < d$, that could accommodate $u$. 
If such a color is found, move $u$ to color $j$ and $v$ to color $i$ and insert
$v$ in $S$. So, the number of vertices colored (in one of the $d$ first colors)
has been increased. We do not detail the algorithm but, again, the operations $\Call{fsb}{}$,
$\Call{add}{}$, $\Call{rem}{}$, and $\Call{inter}{}$ have to be applied allowing
to benefit from bit parallelism.

% By a simple inspection in Alg.~\ref{alg:subdoll}, the memory space required in
% the worst case corresponds to two size-$\lceil n/w \rceil$ bitmaps for each of 
% the $n$ subproblems considered in the recursion and an additional 
% size-$\lceil n/w \rceil$ bitmap in function~\Call{coloring}{}. This gives 
% $2n\lceil n/w \rceil + \lceil n/w \rceil$ bits in the worst case. It can be 
% remarked that the bitmap implementation described in~\cite{Segu11} requires, 
% besides a bitmap, two arrays containing a (partial) coloring that establishes 
% the order in which the vertices of each subproblem are examined. This leads 
% to a memory space of $\Omega(n^2)$ {\em integers}, or $\Omega(n^2 \log n)$ bits, in the worst case.

\subsection{Recursive Russian Dolls Searches}

Another originality of Alg.~\ref{alg:genmethod} is that each decision subproblem is
solved itself with the same principles of the RD method in function
\Call{decide}{} specified in Subsection~\ref{sec:subproblems} and detailed in
Alg.~\ref{alg:subdoll}. In special, at line~\ref{lin:coloring}, a partial
$(\ell-1)$-coloring of $G[S]$ is performed (moving the colored vertices from $S$ to $R$). If $G[S]$ admits an $(\ell-1)$-coloring ({\em i.e.}, $S=\emptyset$
after this call), then it can be concluded that the decision subproblem at hand
is a ``no'' subproblem without any recursive call. Otherwise, if $S\neq
\emptyset$ after line~\ref{lin:coloring}, then $R$ is the set of vertices pruned
by the pruning rule. In this case, the vertices that remain in $S$ are the
candidates used to generate the recursive calls. So, at each iteration of the loop starting at line~\ref{lin:loopR}, a vertex 
$v$ is chosen from the set of remaining vertices to be added to $R$ and a 
recursive call looks for a clique of size $(\ell-1)$ on $G[R \cap N(v)]$.
If such a clique is found, then $v$ is added to this clique and the function 
returns TRUE. If not, the procedure is repeated until there are no 
vertices left in $S$.

\begin{algorithm}[htb]
\caption{Decision subproblem}
\label{alg:subdoll}
\begin{algorithmic}[1]
{\small
\Function{decide}{$G$, $S$, $\ell$, $Cq$}
\If {$S = \emptyset$} \label{lin:maxfound}
\EndIf
\State $R \gets \emptyset$
\State \Call{maxPartiteSubgraph}{$G$, $S$, $R$, $\ell-1$} \label{lin:coloring}
\While {$S \ne \emptyset$} \label{lin:loopR}
   	\State $v \gets \Call{fsb}{S, 0, n}$
	\State $newS \gets$ \Call{inter}{$R$, {\sc neig}$(G, v)$, $n$}
   	\If {\Call{decide}{$G$, $newS$, $\ell-1$, $Cq$}} \label{lin:reccallred}
		\State $\Call{add}{Cq,v}$
		\State \Return TRUE
	\EndIf
	\State $\Call{add}{R,v}$
	\State $\Call{rem}{S,v}$
\EndWhile
\State \Return TRUE if $\ell = 0$, and FALSE otherwise
\EndFunction
}
\end{algorithmic}
\end{algorithm}

\subsection{Fiding Good Solutions Faster}

The effectiveness of Alg.~\ref{alg:genmethod} depends on the number of
executions of line~\ref{lin:reccallred} of Alg.~\ref{alg:subdoll} or,
equivalently, the number of recursive calls to function \Call{decide}{} when
exploring the generated dolls. In our implementation, we adopt the following
strategy to try to reduce this number of recursive calls. Let us consider that a doll, say $G_i$, is generated
at line~\ref{lin:calldecide} of Alg.~\ref{alg:genmethod}. Also, let a
recursive call in the search associated with $G_i$ be characterized by a pair
$(Cq, newS)$ of a clique $Cq$ of $G_i$ and a set $newS \subseteq V_i$ of
candidates. Since $\omega(G_i) \leq MAX + 1$, the search in $G_i$ can be
interrupted at any point if a better clique of $G$, {\em i.e.} a clique of size at least $MAX + 1$, exist.
For the purpose of possibly interrupting the search in $G_i$, we perform calls
to $\Call{extendClique}{G, newS \cup (V \setminus V_i), Cq}$ to some selected
pairs $(Cq, newS)$. Such a call occurs just before line~\ref{lin:reccallred} of
Alg.~\ref{alg:subdoll} and determines a clique $Cq'$ which contains $Cq$ and
vertices from $newS$ or $V \setminus V_i$. If $|Cq'| > MAX$, then the search in
$G_i$ is interrupted and $MAX$ gets $|Cq'|$. Otherwise, the search in $G_i$
continues. In order to avoid the overhead of an excessive number of calls to
$\Call{extendClique}{}$, they are only performed when $|Cq| \geq MAX/2$.

\section{Experimental Results}
\label{sec:exp}

In this section we provide and analyze results of extensive computational
experiments. Our main goal is to assess whether the Russian Dolls
method is effective at exploiting bit-parallelism and partial colorings to
accelerate the search for a maximum clique when compared to Branch and Bound.
To this end, the computational experiments were carried out with
implementations in the C programming language of
Alg.~\ref{alg:genmethod} and of
Alg.~\ref{alg:genmethodbb}. They are called \Call{RDMC}{} and \Call{PBBMC}{},
respectively. The same routines describer in subsections~\ref{sec:bitpar}
and~\ref{sec:ubvcolor}, for bit-parallel operations and
for partial colorings, respectively, are used in both implementations. This
methodology aims at avoiding the influence of programming settings (such as the programming language adopted or strategies for memory management) in the analysis of the algorithms.
Moreover, we also compare our results with calibration times of previous
experiments in the literature. Even being imprecise in nature, this allows
checking if the computation times of our
implementations are compatible with previous works.

Four versions of each implementation have been tested to compare the
effectiveness of different partial coloring heuristics and vertex orderings. For
each implementation, the notation $X/Y$ stands for the combination coloring
heuristic ($X$) and vertex ordering ($Y$). The partial coloring heuristics
tested were \Call{greedyColoring}{} ($X=1$) and \Call{MCScoloring}{} ($X=2$). At
the beginning of each implementation, vertices of $G$ are renumbered according to a specific order. This renumbering fixes the order employed in 
all executions of partial coloring heuristics and determines the selection
strategies. The orders tested were the one given by the \Call{MCR}{} initial
vertex sorting in~\cite{Tomi07} ($Y=3$) and the one by nonincreasing
degree ($Y=4$). We do not apply an Iterated Local Search as in \cite{Bats13}
since we are interested in studying the behavior of the search procedure,
including its ability to find good solutions fast.

The results were obtained with experiments on a computer running a
64-bit Linux operating system and {\tt gcc} as the C compiler (with compiling
options {\tt -m64 -O3 -msse4.2}). All bit-parallel operations are
implemented with the Streaming SIMD Extensions Instructions for 128 bits. We ran all of our
implementations with DIMACS challenge~\cite{Tric96} and BHOSHLIB (in particular, subset {\tt frb30-15})~\cite{Xu} benchmark graphs and with randomly generated graphs. For
the sake of comparison, experimental results presented in~\cite{Tomi13}
(algorithms \Call{MCR}{} and \Call{MCS}{}), and~\cite{Segu14} (algorithms
\Call{DEF}{} and \Call{RECOL\_N}{}), were adjusted according to the
usual DIMACS challenge methodology~\cite{Tric96} (adopted, for instance,
in~\cite{Tomi13}). The average values ($T_1/T_2$ and $T_1/T_3$)
of the user times in ~\cite{Tomi13} ($T_2$) and~\cite{Segu14} ($T_3$) with our
user times ($T_1$) are shown in Table~\ref{tab:machinebench}. 

\begin{table}[htb]
\begin{center}
\begin{tabular}{|c|c|c|c|c|c|} \hline
Instance & $T_1$ & $T_2$ \cite{Tomi13} & 
$T_1/T_2$ & $T_3$ \cite{Segu14} & $T_1/T_3$ \\
\hline $r$100.5 & -     & 0.00 & -  & 0.00  & -    \\
$r$200.5 & 0.04 & 0.042 & 0.95 & 0.003 & 13.33 \\
$r$300.5 & 0.39 & 0.359 & 1.09 & 0.203 & 1.92 \\
$r$400.5 & 2.41 & 2.21  & 1.09 & 1.186 & 2.03 \\
$r$500.5 & 9.18 & 8.47  & 1.08 & 4.587 & 2.00 \\ \hline
\end{tabular}
\end{center}
\caption{User times used to compute the average factors of $T_1/T_2 = 1.09$ and
$T_1/T_3 = 1.98$ with the benchmark program {\tt dfmax} and the machine benchmark graphs $r$300.5, $r$400.5 e $r$500.5.}
\label{tab:machinebench}
\end{table}

% \begin{table}[htb]
% \begin{center}
% \begin{tabular}{|c|c|c|c|c|c|c|c|} \hline
% Instance & $T_1$ & $T_2$ \cite{Segu11} & $T_1/T_2$ & $T_3$ \cite{Tomi13} & 
% $T_1/T_3$ & $T_4$ \cite{Segu13} & $T_1/T_4$ \\
% \hline $r$100.5 & 0.00  & 0.00  & -     & 0.00  & -  & 0.00  & -    \\
% $r$200.5 & 0.04 & 0.031 & 1.29  & 0.042 & 0.95 & 0.003 & 13.33 \\
% $r$300.5 & 0.39 & 0.234 & 1.67  & 0.359 & 1.09 & 0.203 & 1.92 \\
% $r$400.5 & 2.41 & 1.531 & 1.57  & 2.21  & 1.09 & 1.186 & 2.03 \\
% $r$500.5 & 9.18 & 5.766 & 1.59  & 8.47  & 1.08 & 4.587 & 2.00 \\ \hline
% \end{tabular}
% \end{center}
% \caption{User times used to compute the average factors of $T_1/T_2 = 1.61$,
% $T_1/T_3 = 1.09$, and $T_1/T_4 = 1.98$ with the benchmark program {\tt dfmax}
% and the machine benchmark graphs $r$300.5, $r$400.5 e $r$500.5.}
% \label{tab:machinebench}
% \end{table}

The selection of benchmark graphs is shown in Table~\ref{tab:dimacsi}.
This selection was made avoiding instances with running times too small
 or too large. In Table~\ref{tab:randomi}, the selection of random graphs $n\_p$
 is shown, where $n$ is the number of vertices and $p/100$ is the probability
 that each pair of vertices is picked to define an edge.
These graphs were generated having between 200 and 15000 vertices, and
probabilities from 0.1 up to 0.998. For each configuration, five graphs were
generated.

\begin{table}[htb]
\centering {\footnotesize
\begin{tabular}{c|ccc}
\hline
Instance	& $n$	& dens.	& $\omega(G)$	\\
\hline 
{\tt brock200\_1} & 200 & 0.750 & 21 \\ 
{\tt brock400\_1} & 400 & 0.750 & 27 \\ 
{\tt brock400\_2} & 400 & 0.750 & 29 \\ 
{\tt brock400\_3} & 400 & 0.750 & 31 \\ 
{\tt brock400\_4} & 400 & 0.750 & 33 \\ 
{\tt brock800\_1} & 800 & 0.650 & 23 \\ 
{\tt brock800\_2} & 800 & 0.650 & 24 \\ 
{\tt brock800\_3} & 800 & 0.650 & 25 \\ 
{\tt brock800\_4} & 800 & 0.650 & 26 \\ 
{\tt C250.9} & 250 & 0.900 & 44 \\ 
{\tt DSJC1000.5} & 1000 & 0.500 & 15 \\ 
{\tt DSJC500.5} & 500 & 0.500 & 13 \\ 
{\tt frb30-15-1} & 450 & 0.820 & 30 \\ 
{\tt frb30-15-2} & 450 & 0.820 & 30 \\ 
{\tt frb30-15-3} & 450 & 0.820 & 30 \\
{\tt frb30-15-4} & 450 & 0.820 & 30 \\ 
\hline 
\end{tabular}\quad\quad
\begin{tabular}{c|ccc}
\hline
Instance	& $n$	& dens.	& $\omega(G)$	\\
\hline 
{\tt frb30-15-5} & 450 & 0.820 & 30 \\ 
{\tt gen200\_p0.9\_44} & 200 & 0.900 & 44 \\ 
{\tt gen400\_p0.9\_55} & 400 & 0.900 & 55 \\ 
{\tt gen400\_p0.9\_65} & 400 & 0.900 & 65 \\ 
{\tt MANN\_a27} & 378 & 0.990 & 126 \\ 
{\tt MANN\_a45} & 1035 & 1.000 & 345 \\ 
{\tt p\_hat1000-1} & 1000 & 0.245 & 10 \\ 
{\tt p\_hat1000-2} & 1000 & 0.490 & 46 \\ 
{\tt p\_hat1500-1} & 1500 & 0.253 & 12 \\ 
{\tt p\_hat300-3} & 300 & 0.740 & 36 \\ 
{\tt p\_hat700-2} & 700 & 0.500 & 44 \\ 
{\tt p\_hat700-3} & 700 & 0.750 & 62 \\ 
{\tt san400\_0.9\_1} & 400 & 0.900 & 100 \\ 
{\tt sanr200\_0.9} & 200 & 0.890 & 42 \\ 
{\tt sanr400\_0.5} & 400 & 0.500 & 13 \\ 
{\tt sanr400\_0.7} & 400 & 0.700 & 21 \\
\hline 
\end{tabular}}
\caption{Selected benchmark graphs and their numbers of vertices,
densities, and clique numbers.}
\label{tab:dimacsi}
\end{table}

\begin{table}[htb]
\centering {\footnotesize
\begin{tabular}{c|ccc}
\hline
Instance	& $n$	& dens.	& $\omega(G)$	\\
\hline
{\tt 200\_70} & 200 & 0.700 & 18 \\ 
{\tt 200\_80} & 200 & 0.800 & 25--26 \\ 
{\tt 200\_90} & 200 & 0.900 & 40 \\ 
{\tt 200\_95} & 200 & 0.950 & 60--62 \\ 
{\tt 300\_65} & 300 & 0.650 & 17 \\ 
{\tt 300\_70} & 300 & 0.700 & 20 \\ 
{\tt 300\_80} & 300 & 0.800 & 28--29 \\ 
{\tt 300\_98} & 300 & 0.980 & 116--121 \\ 
{\tt 500\_50} & 500 & 0.500 & 13 \\ 
{\tt 500\_60} & 500 & 0.600 & 17 \\ 
{\tt 500\_65} & 500 & 0.650 & 19--20 \\ 
\hline 
\end{tabular}\quad\quad
\begin{tabular}{c|ccc}
\hline
Instance	& $n$	& dens.	& $\omega(G)$	\\
\hline
{\tt 500\_70} & 500 & 0.700 & 22--23 \\ 
{\tt 500\_994} & 500 & 0.994 & 263--270 \\ 
{\tt 1000\_30} & 1000 & 0.300 & 9 \\ 
{\tt 1000\_40} & 1000 & 0.400 & 12 \\ 
{\tt 1000\_50} & 1000 & 0.500 & 15 \\ 
{\tt 1000\_998} & 1000 & 0.998 & 606--613 \\ 
{\tt 5000\_10} & 5000 & 0.100 & 7 \\ 
{\tt 5000\_20} & 5000 & 0.200 & 9 \\ 
{\tt 5000\_30} & 5000 & 0.300 & 12 \\ 
{\tt 10000\_10} & 10000 & 0.100 & 7--8 \\ 
{\tt 15000\_10} & 15000 & 0.100 & 8 \\ 
\hline 
\end{tabular}}
\caption{Selected random graphs $n\_p$ and their clique numbers. For every
configuration, the smallest and greatest clique numbers among the several instances considered are given.}
\label{tab:randomi}
\end{table}

Tables~\ref{tab:dimacs} (for benchmark graphs) and~\ref{tab:random} (for
random graphs) show two measures related to the total number of function calls
performed to determine the initial coloring of decision
(line~\ref{lin:coloring} of Alg.~\ref{alg:subdoll}) or optimization
(line~\ref{lin:maxPartbb1} of Alg.~\ref{alg:genmethodbb}) subproblems.
The first one is its total number (referred to as ``all'') and the second, the
number of calls that result in nonempty sets of uncolored vertices (or, equivalently, the number of
nonleaf nodes of the corresponding search tree, referred to as ``ne'').
Also the CPU user times (in seconds) measured in the experiments are shown in
these tables. The rows are sorted in a nondecreasing order of graph density. For
each graph, the results with the four possible $X/Y$ configurations (lines 1/3, 1/4,
2/3, 2/4) of each implementation (columns \Call{PBBMC}{} or \Call{RDMC}{}) are
given. Computation times available in~\cite{Tomi13} and~\cite{Segu14} are also
presented. In particular, computation times extracted from~\cite{Segu14} are the
ones corresponding to the versions \Call{DEF}{} and \Call{RECOL\_N}{}.
The times spent in the renumbering
procedure are not included in the reported computation times for our
implementations. The symbol in the last column indicates the relative
performance $s = $ (\Call{PBBMC}{} computation time/\Call{RDMC}{} computation
time), as follows: ``--'' for $s \in (0.95,1.05)$; ``$\star$'',
``$\star$$\star$'', and ``$\star$$\star$$\star$'' for the cases when $s$ is in the intervals
$[1.05,1.5)$, $[1.5,2)$, and $[2,\infty)$, respectively; and ``$\circ$'',
``$\circ$$\circ$'', and ``$\circ$$\circ$$\circ$'' for the cases when $s$ belongs
to $(0.67,0.95]$, $(0.5,0.67]$, and $(0,0.5]$, respectively. 

\begin{table}
\begin{center}
\begin{tabular}{|cc|} \hline
{\bf Instance} & {\bf Time} \\ \hline \hline
{\tt 5000\_10}  & 3		\\
{\tt 10000\_10} & 27.39	\\
{\tt 15000\_10} & 92.60	\\
{\tt 5000\_20}  & 9.80	\\
{\tt 5000\_30}  & 20.39	\\
{\tt 1000\_998} & 1.20	\\ \hline
\end{tabular}
\end{center}
\caption{Cases in which the renumbering time for the MCR order is significant
with respect to computation time.}
\label{tab:rtime}
\end{table}

In the results with our implementations, we observe that the order in which the
vertices are considered has a great impact on the computation time in several cases. There
are two parameters that influences this fact. First, with respect to the
renumbering procedure, it has small influence for graphs with density below
75\%. However, the ordering of vertices by noincreasing degree tends to
perform better than the MCR one with a few exceptions for graphs of density at
least 70\%. The more remarkable examples are {\tt frb30-15-3}, {\tt
frb30-15-4}, {\tt frb30-15-5}, {\tt 300\_98}, {\tt 500\_994}, and {\tt 1000\_998}. 
In addition, the time spent in the renumbering procedure is negligible for the
noincreasing degree order, but this is not the case for the MCR in the cases shown in Table~\ref{tab:rtime}.
Second, with respect to the coloring heuristic adopted, the version with
recoloring is effective in determining better partial coloring and, thus, in reducing the
number of explored nodes. This behavior is expected and has already been
observed in~\cite{Tomi13} and~\cite{Segu14}. However, even if the recoloring
procedure is applied only for the nodes in the $MAX/4$ highest levels of the
search tree (which is the case in our experiments), its computation time is too
high if the graph is not sufficiently dense. It is worth remarking that the benefits of the bit-parallelism are severely reduced when the recoloring procedure is used. 

A time comparison with leading previous works indicates that our implementation
of \Call{PBBMC}{} is competitive. In almost all cases, the computation times
of our implementation are much smaller than the calibrated times from the
literature. Even if this comparison suffers from factors due to the programming
environment like programming language, compiler, register length, operating
system and so on, there are convincing evidences that our implementation is
very efficient.

The comparison between \Call{PBBMC}{} and \Call{RDMC}{} is the main objective
of the experiments. Since both B\&B and RD approaches employ a depth-first
strategy, there is a number of explored nodes of the corresponding search trees
whose upper bound obtainable with the coloring heuristics are smaller than the
optimum value $\omega(G)$. This may occur only while the optimum value is not
found. For this reason, the smaller is the number of explored nodes until the
optimum value is found the better is the relative performance of the algorithm.
An evidence of this phenomenon is that there exists a one-to-one
correspondence between leading implementations (\Call{PBBMC}{} or \Call{RDMC}{})
and smaller number of explored subproblems, with the only exception of the
version 1/3 for {\tt san400\_0.9\_1}. We can observe that {\tt C250.9}, {\tt p\_hat300-3}, {\tt p\_hat700-3}, and {\tt p\_hat1000-2}
are instances of DIMACS benchmark graphs whose running times of \Call{RDMC}{}
are significantly faster, which shows its effectiveness when the density is at
least 80\%. Moreover, there are the special cases {\tt gen400\_0.9\_55} and {\tt
gen400\_0.9\_65} for which only \Call{RDMC}{} is able to finish execution
within the time limit of 18000 seconds. This conclusion is corroborated by the
results with random graphs, in which cases the best ratios of improvement are achieved.
In spite of this, it should be noticed that
{\tt brock800\_X} are cases with large running times
for which \Call{PBBMC}{} is faster than
\Call{RDMC}{}.

{\footnotesize
\begin{longtable}{|c|rrrr||rrrl|}
\caption[]{Comparison of the number of generated subproblems and running times
of our implementations on benchmark graphs.
Execution times from~\cite{Tomi13,Segu14} are adjusted according to the 
respective factors listed in Table~\ref{tab:machinebench}. The computation times
for~\cite{Segu14} correspond to the version \Call{DEF}{} and \Call{RECOL\_N}{}.
A blank entry means ``information not available.''\label{tab:dimacs}}\\
\hline \hline
% \multirow{2}{*}{\bf Instance} & \multicolumn{4}{|c|}{\bf Number of subproblems
% $\times 10^{-3}$} & {\bf Renumber} & \multicolumn{4}{|c}{\bf Computation Time
% (sec.)} & \\
% &  	& BBMC* & PBBMC & RDMC & {\bf Time} & & BBMC* & PBBMC & RDMC & \\ \hline
% \multirow{4}{*}{\em Graph} & \multirow{2}{*}{\cite{Tomi13}} 	 	&	1/3 &	1/3 &	1/3
% & \multirow{2}{*}{\em RT1} & \multirow{2}{*}{\cite{Tomi13}}     	& 1/3 &	1/3 &
% 1/3 & \\
% &  	&	1/4 &	1/4 &	1/4 &  & 	& 1/4 &	1/4 &	1/4 &   	\\ \cline{2-5} \cline{7-11}
% & \multirow{2}{*}{\cite{Segu11}} 	&	2/3 &	2/3 &	2/3 & \multirow{2}{*}{\em RT2} &
% \multirow{2}{*}{\cite{Segu13}} & 2/3 & 2/3 & 2/3   & 	\\
% &  	&	2/4 &	2/4 &	2/4 &  &  & 2/4 &	2/4 &	2/4  &  	\\
\multirow{2}{*}{\bf Instance} & \multicolumn{4}{|c||}{\bf N. of subps $\times
10^{-3}$} & \multicolumn{4}{|c|}{\bf Computation Time (sec.)}  \\
&\multicolumn{2}{c}{\Call{PBBMC}{}} & \multicolumn{2}{c||}{\Call{RDMC}{}} & &
\Call{PBBMC}{} & \Call{RDMC}{} & \\ \hline 
\multirow{4}{*}{\em Graph} & 1/3-all & 1/3-ne &	1/3-all &	1/3-ne &
\multirow{2}{*}{\cite{Tomi13}}     	& 1/3 & 1/3 & \\
& 1/4-all & 1/4-ne 	&	1/4-all &	1/4-ne  & 	& 1/4 &	1/4 &	   	\\
\cline{2-9} 
& 2/3-all & 2/3-ne	&	2/3-all &	2/3-ne & \multirow{2}{*}{\cite{Segu14}} & 2/3 & 2/3 &  	\\
& 2/4-all & 2/4-ne 	&	2/4-all &	2/4-ne &  & 2/4 &	2/4 &	  	\\
\hline \hline
\endfirsthead
\caption[]{(continued)}\\
\hline \hline
% \multirow{2}{*}{\bf Instance} & \multicolumn{4}{|c|}{\bf Number of subproblems
% $\times 10^{-3}$} & {\bf Renumber} & \multicolumn{4}{|c}{\bf Computation Time
% (sec.)} & \\
% &  	& BBMC* & PBBMC & RDMC & {\bf Time} & & BBMC* & PBBMC & RDMC & \\ \hline
% \multirow{4}{*}{\em Graph} & \multirow{2}{*}{\cite{Tomi13}} 	 	&	1/3 &	1/3 &	1/3
% & \multirow{2}{*}{\em RT1} & \multirow{2}{*}{\cite{Tomi13}}     	& 1/3 &	1/3 &
% 1/3 & \\
% &  	&	1/4 &	1/4 &	1/4 &  & 	& 1/4 &	1/4 &	1/4 &   	\\ \cline{2-5} \cline{7-11}
% & \multirow{2}{*}{\cite{Segu11}} 	&	2/3 &	2/3 &	2/3 & \multirow{2}{*}{\em RT2} &
% \multirow{2}{*}{\cite{Segu13}} & 2/3 & 2/3 & 2/3    &	\\
% &  	&	2/4 &	2/4 &	2/4 &  &  & 2/4 &	2/4 &	2/4 &   	\\
\multirow{2}{*}{\bf Instance} & \multicolumn{4}{|c||}{\bf N. of subps $\times
10^{-3}$} & \multicolumn{4}{|c|}{\bf Computation Time (sec.)}  \\
&\multicolumn{2}{c}{\Call{PBBMC}{}} & \multicolumn{2}{c||}{\Call{RDMC}{}} & &
\Call{PBBMC}{} & \Call{RDMC}{} & \\ \hline 
\multirow{4}{*}{\em Graph} & 1/3-all & 1/3-ne &	1/3-all &	1/3-ne &
\multirow{2}{*}{\cite{Tomi13}}     	& 1/3 & 1/3 & \\
& 1/4-all & 1/4-ne 	&	1/4-all &	1/4-ne  & 	& 1/4 &	1/4 &	   	\\
\cline{2-9} 
& 2/3-all & 2/3-ne	&	2/3-all &	2/3-ne & \multirow{2}{*}{\cite{Segu14}} & 2/3 & 2/3 &  	\\
& 2/4-all & 2/4-ne 	&	2/4-all &	2/4-ne &  & 2/4 &	2/4 &	  	\\
\hline \hline
\endhead
\rowcolor[gray]{.8}  & 169.0 & 38.23 & 164.3 & 36.59 & -- & 0.141 & 0.140 & -- \\* 
\rowcolor[gray]{.8}  & 142.6 & 30.01 & 139.0 & 28.70 & -- & 0.165 & 0.164 & -- \\*
\cline{2-9}
\rowcolor[gray]{.8}  & 176.7 & 37.97 & 174.1 & 37.20 & 0.506 & 0.146 & 0.147 & -- \\* 
\rowcolor[gray]{.8} \multirow{-4}{*}{\tt p\_hat1000-1}  & 151.9 & 30.31 & 149.7 & 29.58 & 0.584 & 0.171 & 0.172 & -- \\
\hline

 & 1056 & 147.8 & 1451 & 229.1 & 5.55 & 1.30 & 1.65 & $\circ$ \\* 
 & 951.1 & 125.7 & 1377 & 220.1 & 4.25 & 1.51 & 1.87 & $\circ$ \\*
\cline{2-9}
 & 1487 & 220.6 & 1128 & 175.6 & 5.08 & 1.70 & 1.36 & $\star$ \\* 
\multirow{-4}{*}{\tt p\_hat1500-1}  & 1398 & 205.9 & 1033 & 154.6 & 5.44 & 1.93 & 1.63 & $\star$ \\
\hline

\rowcolor[gray]{.8}  & 22973 & 6017 & 13559 & 3451 & 2653 & 77.43 & 48.64 & $\star$$\star$ \\* 
\rowcolor[gray]{.8}  & 14598 & 3305 & 8690 & 1899  & 240.8 & 80.93 & 50.65 & $\star$$\star$ \\*
\cline{2-9}
\rowcolor[gray]{.8}  & 39574 & 11504 & 20594 & 5789 & 284.1 & 127.7 & 71.43 & $\star$$\star$ \\* 
\rowcolor[gray]{.8} \multirow{-4}{*}{\tt p\_hat1000-2}  & 24645 & 6402 & 12895 & 3217 & 245.7 & 133.3 & 74.08 & $\star$$\star$ \\
\hline

 & 76070 & 16081 & 74259 & 15532 & 441.4 & 87.69 & 88.53 & -- \\* 
 & 70427 & 15331 & 68479 & 14768 & 319.3 & 96.28 & 96.12 & -- \\*
\cline{2-9}
 & 79916 & 16289 & 82514 & 16969 & 295.4 & 92.39 & 96.76 & -- \\* 
\multirow{-4}{*}{\tt DSJC1000.5}  & 74569 & 15578 & 76987 & 16257 & 333.8 & 100.9 & 104.6 & -- \\
\hline

\rowcolor[gray]{.8}  & 1084 & 275.7 & 1012 & 255.9 & 4.57 & 0.774 & 0.764 & -- \\* 
\rowcolor[gray]{.8}  & 846.3 & 193.8 & 779.7 & 173.3 & 3.37 & 0.951 & 0.915 & -- \\*
\cline{2-9}
\rowcolor[gray]{.8}  & 1183 & 287.7 & 1102 & 266.0 & 2.13 & 0.846 & 0.838 & -- \\* 
\rowcolor[gray]{.8} \multirow{-4}{*}{\tt DSJC500.5}  & 950.5 & 211.8 & 874.3 & 189.3 & 2.51 & 1.03 & 0.998 & -- \\
\hline

 & 231.3 & 57.09 & 231.3 & 57.09 & 48.39 & 0.738 & 0.769 & -- \\* 
 & 162.5 & 33.74 & 162.5 & 33.74 & 6.10 & 0.839 & 0.856 & -- \\*
\cline{2-9}
 & 741.3 & 207.9 & 345.5 & 95.32 & 5.30 & 2.10 & 1.09 & $\star$$\star$ \\* 
\multirow{-4}{*}{\tt p\_hat700-2}  & 504.8 & 123.7 & 238.6 & 57.25 & 5.08 & 2.39 & 1.23 & $\star$$\star$ \\
\hline

\rowcolor[gray]{.8}  & 250.9 & 62.11 & 236.2 & 57.57 & -- & 0.175 & 0.174 & -- \\* 
\rowcolor[gray]{.8}  & 196.9 & 39.25 & 185.3 & 34.02 & -- & 0.216 & 0.207 & -- \\*
\cline{2-9}
\rowcolor[gray]{.8}  & 266.0 & 63.91 & 204.4 & 44.12 & 0.433 & 0.185 & 0.159 & $\star$ \\* 
\rowcolor[gray]{.8} \multirow{-4}{*}{\tt sanr400\_0.5}  & 212.7 & 41.94 & 164.0 & 26.29 & 0.520 & 0.230 & 0.190 & $\star$ \\
\hline

 & 1924228 & 451474 & 3544359 & 904788 & 19390 & 2975 & 5194 & $\circ$$\circ$ \\* 
 & 1327872 & 277754 & 2488522 & 600952 & 10188 & 3073 & 5321 & $\circ$$\circ$ \\*
\cline{2-9}
 & 2149229 & 487116 & 3891822 & 964831 & 9303 & 3367 & 5673 & $\circ$$\circ$ \\* 
\multirow{-4}{*}{\tt brock800\_1}  & 1546067 & 316388 & 2861531 & 682516 & 9497 & 3491 & 5901 & $\circ$$\circ$ \\
\hline

\rowcolor[gray]{.8}  & 1654952 & 387936 & 4459555 & 1155932 & 17492 & 2599 & 6448 & $\circ$$\circ$$\circ$ \\* 
\rowcolor[gray]{.8}  & 1172568 & 251211 & 3189649 & 806701 & 9121 & 2679 & 6636 & $\circ$$\circ$$\circ$ \\*
\cline{2-9}
\rowcolor[gray]{.8}  & 2063020 & 474353 & 4460120 & 1118033 & 8533 & 3200 & 6388 & $\circ$$\circ$ \\* 
\rowcolor[gray]{.8} \multirow{-4}{*}{\tt brock800\_2}  & 1522715 & 326797 & 3325413 & 816754 & 8378 & 3338 & 6700 & $\circ$$\circ$$\circ$ \\
\hline

 & 1065164 & 240368 & 1912833 & 464107 & 11829 & 1719 & 2968 & $\circ$$\circ$ \\* 
 & 757036 & 159313 & 1299528 & 280306  & 6272 & 1774 & 3013 & $\circ$$\circ$ \\*
\cline{2-9}
 & 1228395 & 269984 & 1951966 & 454995 & 5574 & 1973 & 3051 & $\circ$$\circ$ \\* 
\multirow{-4}{*}{\tt brock800\_3}  & 909264 & 187194 & 1395656 & 295756 & 5396 & 2068 & 3135 & $\circ$$\circ$ \\
\hline

\rowcolor[gray]{.8}  & 679270 & 145266 & 2833636 & 734803 & 8217 & 1171 & 4082 & $\circ$$\circ$$\circ$ \\* 
\rowcolor[gray]{.8}  & 480282 & 96096 & 1996967 & 497131  & 4356 & 1200 & 4261 & $\circ$$\circ$$\circ$ \\*
\cline{2-9}
\rowcolor[gray]{.8}  & 2242760 & 536283 & 2641180 & 649939 & 3987 & 3315 & 3886 & $\circ$ \\* 
\rowcolor[gray]{.8} \multirow{-4}{*}{\tt brock800\_4}  & 1695941 & 403075 & 1935412 & 458605 & 3913 & 3475 & 4096 & $\circ$ \\
\hline

 & 55258 & 16573 & 55275 & 16581 & 413.1 & 55.48 & 57.18 & -- \\* 
 & 35267 & 9666 & 35240 & 9656  & 197.2 & 64.46 & 64.91 & -- \\*
\cline{2-9}
 & 64523 & 18819 & 67641 & 19935 & 129.9 & 64.34 & 68.74 & $\circ$ \\* 
\multirow{-4}{*}{\tt sanr400\_0.7}  & 43415 & 11697 & 45552 & 12459 & 139.1 & 75.22 & 79.03 & -- \\
\hline

\rowcolor[gray]{.8}  & 407.5 & 127.4 & 154.2 & 44.19 & 11.77 & 0.607 & 0.260 & $\star$$\star$$\star$ \\* 
\rowcolor[gray]{.8}  & 243.8 & 69.22 & 93.78 & 22.74 & 2.72 & 0.758 & 0.317 & $\star$$\star$$\star$ \\*
\cline{2-9}
\rowcolor[gray]{.8}  & 762.1 & 258.4 & 242.7 & 75.07 & 1.79 & 1.07 & 0.402 & $\star$$\star$$\star$ \\* 
\rowcolor[gray]{.8} \multirow{-4}{*}{\tt p\_hat300-3}  & 458.6 & 142.7 & 149.9 & 40.95 & 1.94 & 1.36 & 0.489 & $\star$$\star$$\star$ \\
\hline

 & 307.2 & 101.7 & 460.0 & 160.5 & 1.87 & 0.214 & 0.312 & $\circ$ \\* 
 & 181.7 & 54.92 & 271.5 & 86.64 & 0.937 & 0.286 & 0.424 & $\circ$ \\*
\cline{2-9}
 & 391.0 & 129.2 & 482.5 & 164.6 & 0.495 & 0.269 & 0.328 & $\circ$ \\* 
\multirow{-4}{*}{\tt brock200\_1}  & 254.1 & 77.52 & 306.3 & 95.91 & 0.546 & 0.365 & 0.456 & $\circ$ \\
\hline

\rowcolor[gray]{.8}  & 178635 & 53031 & 218166 & 66623 & 1930 & 216.6 & 265.8 & $\circ$ \\* 
\rowcolor[gray]{.8}  & 104981 & 29024 & 123173 & 34472 & 755.3 & 237.4 & 284.9 & $\circ$ \\*
\cline{2-9}
\rowcolor[gray]{.8}  & 203454 & 58646 & 182531 & 52453 & 495 & 245.5 & 232.4 & $\star$ \\* 
\rowcolor[gray]{.8} \multirow{-4}{*}{\tt brock400\_1}  & 127091 & 34221 & 109789 & 28684 & 501.5 & 276.4 & 251.2 & $\star$ \\
\hline

 & 70985 & 20478 & 163578 & 51435 & 791.3 & 92.61 & 197.2 & $\circ$$\circ$$\circ$ \\* 
 & 45361 & 12912 & 92685 & 27352  & 323.7 & 99.46 & 208.9 & $\circ$$\circ$$\circ$ \\*
\cline{2-9}
 & 122378 & 35356 & 157131 & 45587 & 210.4 & 151.8 & 198.4 & $\circ$ \\* 
\multirow{-4}{*}{\tt brock400\_2}  & 82234 & 23157 & 95762 & 25701 & 210.8 & 166.8 & 214.9 & $\circ$ \\
\hline

\rowcolor[gray]{.8}  & 127946 & 39572 & 119906 & 37436 & 1308 & 146.8 & 143.9 & -- \\* 
\rowcolor[gray]{.8}  & 75133 & 21945 & 66964 & 19342  & 510.1 & 162.5 & 153.8 & $\star$ \\*
\cline{2-9}
\rowcolor[gray]{.8}  & 168342 & 50294 & 113363 & 34720 & 332.0 & 193.2 & 133.6 & $\star$ \\* 
\rowcolor[gray]{.8} \multirow{-4}{*}{\tt brock400\_3}  & 105186 & 29763 & 69544 & 20137 & 342.9 & 216.9 & 146.7 & $\star$ \\
\hline

 & 60763 & 17870 & 65939 & 20557 & 696.5 & 74.73 & 81.08 & $\circ$ \\* 
 & 35907 & 9813 & 36382 & 10389  & 270.3 & 81.78 & 85.50 & -- \\*
\cline{2-9}
 & 68010 & 19515 & 51187 & 15660 & 206.5 & 83.50 & 61.72 & $\star$ \\* 
\multirow{-4}{*}{\tt brock400\_4}  & 43000 & 11533 & 31194 & 8985 & 213.2 & 93.54 & 67.48 & $\star$ \\
\hline

\rowcolor[gray]{.8}  & 166590 & 42760 & 96961 & 24375 & 74323 & 649.4 & 403.3 & $\star$$\star$ \\* 
\rowcolor[gray]{.8}  & 92589 & 21232 & 54109 & 12126  & 2607 & 670.1 & 408.4 & $\star$$\star$ \\*
\cline{2-9}
\rowcolor[gray]{.8}  & 306990 & 84282 & 151510 & 40865 & 2578 & 1173 & 622.0 & $\star$$\star$ \\* 
\rowcolor[gray]{.8} \multirow{-4}{*}{\tt p\_hat700-3}  & 174538 & 43336 & 86027 & 20922 & 2296 & 1231 & 644.7 & $\star$$\star$ \\
\hline

 & 561752 & 195532 & 403867 & 141579 & -- & 970.4 & 727.8 & $\star$ \\* 
 & 297794 & 106338 & 227680 & 81421  & -- & 809.7 & 639.4 & $\star$ \\*
\cline{2-9}
 & 501918 & 173172 & 583655 & 205002 & 2361 & 899.3 & 1057 & $\circ$ \\* 
\multirow{-4}{*}{\tt frb30-15-1}  & 304227 & 106217 & 358024 & 128089 & 3057 & 824.4 & 961.4 & $\circ$ \\
\hline

\rowcolor[gray]{.8}  & 564777 & 196671 & 587228 & 205260 & -- & 967.5 & 1039 & $\circ$ \\* 
\rowcolor[gray]{.8}  & 323762 & 115112 & 337370 & 120645 & -- & 865.0 & 899.1 & -- \\*
\cline{2-9}
\rowcolor[gray]{.8}  & 512867 & 183336 & 672671 & 240659 & 2725 & 897.5 & 1193 & $\circ$ \\* 
\rowcolor[gray]{.8} \multirow{-4}{*}{\tt frb30-15-2}  & 323978 & 120593 & 436865 & 162446 & 2102 & 810.9 & 1052 & $\circ$ \\
\hline

 & 448754 & 156829 & 741805 & 268482 & -- & 779.7 & 1282 & $\circ$$\circ$ \\* 
 & 243166 & 87841 & 415467 & 156231  & -- & 652.5 & 1028 & $\circ$$\circ$ \\*
\cline{2-9}
 & 163409 & 59582 & 319548 & 121538  & 1551 & 294.2 & 557.2 & $\circ$$\circ$ \\* 
\multirow{-4}{*}{\tt frb30-15-3}  & 111497 & 41752 & 216080 & 84317 & 991.3 & 276.3 & 482.2 & $\circ$$\circ$ \\
\hline

\rowcolor[gray]{.8}  & 1781554 & 627424 & 843752 & 287111 & -- & 2961 & 1540 & $\star$$\star$ \\* 
\rowcolor[gray]{.8}  & 998141 & 364563 & 440556 & 153258  & -- & 2548 & 1259 & $\star$$\star$$\star$ \\*
\cline{2-9}
\rowcolor[gray]{.8}  & 1125727 & 399811 & 483553 & 163851 & 4694 & 1905 & 901.6 & $\star$$\star$$\star$ \\* 
\rowcolor[gray]{.8} \multirow{-4}{*}{\tt frb30-15-4}  & 753613 & 277360 & 297691 & 103577 & 3263 & 1746 & 789.2 & $\star$$\star$$\star$ \\
\hline

 & 664074 & 235213 & 995328 & 360589 & -- & 1170 & 1754 & $\circ$$\circ$ \\* 
 & 334927 & 123988 & 522826 & 198063 & -- & 897.5 & 1329 & $\circ$ \\*
\cline{2-9}
 & 342092 & 126432 & 623818 & 236143 & 3412 & 601.5 & 1075 & $\circ$$\circ$ \\* 
\multirow{-4}{*}{\tt frb30-15-5}  & 239928 & 91485 & 457648 & 177749 & 2166 & 546.5 & 963.2 & $\circ$$\circ$ \\
\hline

\rowcolor[gray]{.8}  & 11607 & 4539 & 10304 & 3998 & 315.0 & 14.69 & 13.18 & $\star$ \\* 
\rowcolor[gray]{.8}  & 3465 & 1439 & 2993 & 1233  & 44.69 & 13.67 & 12.28 & $\star$ \\*
\cline{2-9}
\rowcolor[gray]{.8}  & 14729 & 5710 & 14528 & 5683 & 27.16 & 18.22 & 18.34 & -- \\* 
\rowcolor[gray]{.8} \multirow{-4}{*}{\tt sanr200\_0.9}  & 6079 & 2367 & 5874 & 2325 & 21.60 & 20.76 & 20.66 & -- \\
\hline

 & 991805 & 388397 & 781787 & 302033 & 48193 & 1240 & 1183 & -- \\* 
 & 316861 & 131958 & 245797 & 100860 & 3550 & 1271 & 1032 & $\star$ \\*
\cline{2-9}
 & 1137737 & 445366 & 972115 & 378289 & 2361 & 1474 & 1303 & $\star$ \\* 
\multirow{-4}{*}{\tt C250.9}  & 459609 & 182377 & 390771 & 154130 & 1964 & 1666 & 1553 & $\star$ \\
\hline

\rowcolor[gray]{.8}  & 197.1 & 82.04 & 687.2 & 286.7 & 5.87 & 0.259 & 0.976 & $\circ$$\circ$$\circ$ \\* 
\rowcolor[gray]{.8}  & 66.56 & 30.46 & 205.6 & 93.02 & 0.512 & 0.218 & 0.891 & $\circ$$\circ$$\circ$ \\*
\cline{2-9}
\rowcolor[gray]{.8}  & 329.5 & 146.1 & 966.1 & 427.7 & 0.766 & 0.432 & 1.39 & $\circ$$\circ$$\circ$ \\* 
\rowcolor[gray]{.8} \multirow{-4}{*}{\tt gen200\_p0.9\_44}  & 142.2 & 66.60 & 361.4 & 167.3 & 0.493 & 0.437 & 1.22 & $\circ$$\circ$$\circ$ \\
\hline

 & -- & -- & 2364038 & 910227  & 6373176 & $>$ 18000 & 6289 &
 $\star$$\star$$\star$ \\* 
 & -- & -- & 666569 & 261728 & 63689 & $>$ 18000 & 3691 & $\star$$\star$$\star$
 \\* 
 \cline{2-9} & -- & -- & -- & -- & -- & $>$ 18000 & $>$ 18000 & -- \\* 
\multirow{-4}{*}{\tt gen400\_p0.9\_55}  & -- & -- & -- & -- & -- & $>$ 18000 & $>$ 18000 & -- \\
\hline

\rowcolor[gray]{.8}  & -- & -- & 536110 & 193972 & -- & $>$ 18000 & 1527 &
$\star$$\star$$\star$ \\* 
\rowcolor[gray]{.8}  & -- & -- & 116621 & 44894  & 165240 & $>$ 18000 & 749.1 & $\star$$\star$$\star$ \\*
\cline{2-9}
\rowcolor[gray]{.8}  & -- & -- & -- & -- & -- & $>$ 18000 & $>$ 18000 & -- \\* 
\rowcolor[gray]{.8} \multirow{-4}{*}{\tt gen400\_p0.9\_65}  & -- & -- & -- & -- & -- & $>$ 18000 & $>$ 18000 & -- \\
\hline

 & 377.0 & 159.1 & 302.0 & 83.82 & 3.70 & 0.745 & 1.00 & $\circ$ \\* 
 & 52.64 & 26.83 & 28.50 & 9.85  & 0.109 & 0.340 & 0.312 & $\star$ \\*
\cline{2-9}
 & 5275 & 2194 & 30308 & 11794   & 60.44 & 13.36 & 71.00 & $\circ$$\circ$$\circ$ \\* 
\multirow{-4}{*}{\tt san400\_0.9\_1}  & 1957 & 909.4 & 5088 & 2170 & 15.54 & 8.58 & 32.82 & $\circ$$\circ$$\circ$ \\
\hline

\rowcolor[gray]{.8}  & 37.90 & 18.30 & 40.40 & 19.39 & 2.72 & 0.194 & 0.213 & $\circ$ \\* 
\rowcolor[gray]{.8}  & 8.92 & 4.67 & 9.94 & 5.32  & 0.872 & 0.123 & 0.139 & $\circ$ \\*
\cline{2-9}
\rowcolor[gray]{.8}  & 37.90 & 18.30 & 40.40 & 19.39 & 0.453 & 0.178 & 0.198 & $\circ$ \\* 
\rowcolor[gray]{.8} \multirow{-4}{*}{\tt MANN\_a27}  & 8.92 & 4.67 & 9.94 & 5.32 & 0.310 & 0.142 & 0.163 & $\circ$ \\
\hline

 & 2952 & 1081 & 3827 & 1485 & 3368 & 71.05 & 94.25 & $\circ$ \\* 
 & 242.9 & 118.7 & 452.9 & 225.6  & 306.2 & 18.40 & 31.73 & $\circ$$\circ$ \\*
\cline{2-9}
 & 2952 & 1081 & 3827 & 1485 & 220.7 & 63.46 & 84.82 & $\circ$ \\* 
\multirow{-4}{*}{\tt MANN\_a45}  & 242.9 & 118.7 & 452.9 & 225.6 & 53.16 & 18.10 & 39.18 & $\circ$$\circ$$\circ$ \\
\hline
\hline
\end{longtable}}

{\footnotesize
\begin{longtable}{|c|rrrr||rrrl|}
\caption[]{Comparison of the number of generated subproblems and running times
of our implementations on benchmark graphs.
Execution times from~\cite{Tomi13,Segu14} are adjusted according to the 
respective factors listed in Table~\ref{tab:machinebench}. The computation times
for~\cite{Segu14} correspond to the version \Call{DEF}{} and \Call{RECOL\_N}{}.
A blank entry means ``information not available.''\label{tab:random}}\\
\hline \hline
% \multirow{2}{*}{\bf Instance} & \multicolumn{4}{|c|}{\bf Number of subproblems
% $\times 10^{-3}$} & {\bf Renumber} & \multicolumn{4}{|c}{\bf Computation Time
% (sec.)} & \\
% &  	& BBMC* & PBBMC & RDMC & {\bf Time} & & BBMC* & PBBMC & RDMC & \\ \hline
% \multirow{4}{*}{\em Graph} & \multirow{2}{*}{\cite{Tomi13}} 	 	&	1/3 &	1/3 &	1/3
% & \multirow{2}{*}{\em RT1} & \multirow{2}{*}{\cite{Tomi13}}     	& 1/3 &	1/3 &
% 1/3 & \\
% &  	&	1/4 &	1/4 &	1/4 &  & 	& 1/4 &	1/4 &	1/4 &   	\\ \cline{2-5} \cline{7-11}
% & \multirow{2}{*}{\cite{Segu11}} 	&	2/3 &	2/3 &	2/3 & \multirow{2}{*}{\em RT2} &
% \multirow{2}{*}{\cite{Segu13}} & 2/3 & 2/3 & 2/3   & 	\\
% &  	&	2/4 &	2/4 &	2/4 &  &  & 2/4 &	2/4 &	2/4  &  	\\
\multirow{2}{*}{\bf Instance} & \multicolumn{4}{|c||}{\bf N. of subps $\times
10^{-3}$} & \multicolumn{4}{|c|}{\bf Computation Time (sec.)}  \\
&\multicolumn{2}{c}{\Call{PBBMC}{}} & \multicolumn{2}{c||}{\Call{RDMC}{}} & &
\Call{PBBMC}{} & \Call{RDMC}{} & \\ \hline 
\multirow{4}{*}{\em Graph} & 1/3-all & 1/3-ne &	1/3-all &	1/3-ne &
\multirow{2}{*}{\cite{Tomi13}}     	& 1/3 & 1/3 & \\
& 1/4-all & 1/4-ne 	&	1/4-all &	1/4-ne  & 	& 1/4 &	1/4 &	   	\\
\cline{2-9} 
& 2/3-all & 2/3-ne	&	2/3-all &	2/3-ne & \multirow{2}{*}{\cite{Segu14}} & 2/3 & 2/3 &  	\\
& 2/4-all & 2/4-ne 	&	2/4-all &	2/4-ne &  & 2/4 &	2/4 &	  	\\
\hline \hline
\endfirsthead
\caption[]{(continued)}\\
\hline \hline
% \multirow{2}{*}{\bf Instance} & \multicolumn{4}{|c|}{\bf Number of subproblems
% $\times 10^{-3}$} & {\bf Renumber} & \multicolumn{4}{|c}{\bf Computation Time
% (sec.)} & \\
% &  	& BBMC* & PBBMC & RDMC & {\bf Time} & & BBMC* & PBBMC & RDMC & \\ \hline
% \multirow{4}{*}{\em Graph} & \multirow{2}{*}{\cite{Tomi13}} 	 	&	1/3 &	1/3 &	1/3
% & \multirow{2}{*}{\em RT1} & \multirow{2}{*}{\cite{Tomi13}}     	& 1/3 &	1/3 &
% 1/3 & \\
% &  	&	1/4 &	1/4 &	1/4 &  & 	& 1/4 &	1/4 &	1/4 &   	\\ \cline{2-5} \cline{7-11}
% & \multirow{2}{*}{\cite{Segu11}} 	&	2/3 &	2/3 &	2/3 & \multirow{2}{*}{\em RT2} &
% \multirow{2}{*}{\cite{Segu13}} & 2/3 & 2/3 & 2/3    &	\\
% &  	&	2/4 &	2/4 &	2/4 &  &  & 2/4 &	2/4 &	2/4 &   	\\
\multirow{2}{*}{\bf Instance} & \multicolumn{4}{|c||}{\bf N. of subps $\times
10^{-3}$} & \multicolumn{4}{|c|}{\bf Computation Time (sec.)}  \\
&\multicolumn{2}{c}{\Call{PBBMC}{}} & \multicolumn{2}{c||}{\Call{RDMC}{}} & &
\Call{PBBMC}{} & \Call{RDMC}{} & \\ \hline 
\multirow{4}{*}{\em Graph} & 1/3-all & 1/3-ne &	1/3-all &	1/3-ne &
\multirow{2}{*}{\cite{Tomi13}}     	& 1/3 & 1/3 & \\
& 1/4-all & 1/4-ne 	&	1/4-all &	1/4-ne  & 	& 1/4 &	1/4 &	   	\\
\cline{2-9} 
& 2/3-all & 2/3-ne	&	2/3-all &	2/3-ne & \multirow{2}{*}{\cite{Segu14}} & 2/3 & 2/3 &  	\\
& 2/4-all & 2/4-ne 	&	2/4-all &	2/4-ne &  & 2/4 &	2/4 &	  	\\
\hline \hline
\endhead
\rowcolor[gray]{.8}  & 522.8 & 9.95 & 542.7 & 8.77 & 5.77 & 1.22 & 1.23 & -- \\* 
\rowcolor[gray]{.8}  & 521.9 & 9.86 & 542.0 & 8.73 & 3.59 & 1.24 & 1.25 & -- \\*
\cline{2-9}
\rowcolor[gray]{.8}  & 532.4 & 10.91 & 555.9 & 10.72 & 2.89 & 1.25 & 1.26 & -- \\* 
\rowcolor[gray]{.8} \multirow{-4}{*}{\tt 5000\_10}  & 531.4 & 10.80 & 555.3 & 10.66 & 3.28 & 1.27 & 1.29 & -- \\
\hline

 & 4707 & 673.4 & 5159 & 832.3 & 109 & 21.84 & 21.92 & -- \\* 
 & 4645 & 629.4 & 5128 & 812.7 & 65.40 & 21.98 & 22.04 & -- \\*
\cline{2-9}
 & 4833 & 688.9 & 5056 & 769.4 & 49.85 & 22.34 & 22.00 & -- \\* 
\multirow{-4}{*}{\tt 10000\_10}  & 4778 & 650.5 & 5014 & 740.1 & 51.59 & 22.50 & 22.15 & -- \\
\hline

\rowcolor[gray]{.8}  & 20829 & 2648 & 20553 & 3106 & 556.9 & 121.0 & 116.2 & -- \\* 
\rowcolor[gray]{.8}  & 18372 & 1825 & 18072 & 2369 & 356.4 & 120.3 & 116.5 & -- \\*
\cline{2-9}
\rowcolor[gray]{.8}  & 21468 & 2631 & 22900 & 3290 & 284.3 & 124.0 & 123.3 & -- \\* 
\rowcolor[gray]{.8} \multirow{-4}{*}{\tt 15000\_10}  & 19115 & 1857 & 20783 & 2711 & 290.2 & 123.6 & 123.7 & -- \\
\hline

 & 29361 & 1657 & 30267 & 1705 & 214.7 & 66.75 & 67.16 & -- \\* 
 & 28493 & 1526 & 29571 & 1582 & 150.4 & 69.22 & 70.12 & -- \\*
\cline{2-9}
 & 29663 & 1605 & 31406 & 1708 & 167.7 & 68.39 & 70.07 & -- \\* 
\multirow{-4}{*}{\tt 5000\_20}  & 28825 & 1481 & 30539 & 1584 & 186.8 & 70.74 & 72.86 & -- \\
\hline

\rowcolor[gray]{.8}  & 328.9 & 51.80 & 387.7 & 61.98 & -- & 0.273 & 0.313 & $\circ$ \\* 
\rowcolor[gray]{.8}  & 289.5 & 42.00 & 343.6 & 52.08 & -- & 0.316 & 0.358 & $\circ$ \\*
\cline{2-9}
\rowcolor[gray]{.8}  & 345.5 & 50.80 & 396.4 & 59.59 & 0.986 & 0.286 & 0.320 & $\circ$ \\* 
\rowcolor[gray]{.8} \multirow{-4}{*}{\tt 1000\_30}  & 308.5 & 42.14 & 355.4 & 50.57 & 1.12 & 0.328 & 0.364 & $\circ$ \\
\hline

 & 1229261 & 142458 & 1355238 & 159621 & 9448 & 3335 & 3538 & $\circ$ \\* 
 & 1104626 & 114531 & 1262710 & 141777 & 6341 & 3409 & 3622 & $\circ$ \\*
\cline{2-9}
 & 1268906 & 140585 & 1354901 & 152294 & -- & 3479 & 3607 & -- \\* 
\multirow{-4}{*}{\tt 5000\_30}  & 1151697 & 115171 & 1258607 & 133535 & -- & 3552 & 3696 & -- \\
\hline

\rowcolor[gray]{.8}  & 3727 & 768.7 & 3699 & 776.0 & 17.54 & 3.74 & 3.78 & -- \\* 
\rowcolor[gray]{.8}  & 3110 & 494.0 & 3059 & 497.8 & 14.38 & 4.21 & 4.14 & -- \\*
\cline{2-9}
\rowcolor[gray]{.8}  & 3908 & 777.7 & 3906 & 794.6  & 12.33 & 3.91 & 3.98 & -- \\* 
\rowcolor[gray]{.8} \multirow{-4}{*}{\tt 1000\_40}  & 3304 & 520.4 & 3289 & 538.8 & 13.93 & 4.40 & 4.36 & -- \\
\hline

 & 949.9 & 238.7 & 849.5 & 211.9 & 3.92 & 0.686 & 0.644 & $\star$ \\* 
 & 738.9 & 162.3 & 653.1 & 136.2 & 3.05 & 0.838 & 0.771 & $\star$ \\*
\cline{2-9}
 & 968.8 & 232.3 & 973.1 & 236.5 & 1.75 & 0.711 & 0.731 & -- \\* 
\multirow{-4}{*}{\tt 500\_50}  & 769.7 & 162.6 & 768.0 & 163.1 & 2.05 & 0.868 & 0.874 & -- \\
\hline

\rowcolor[gray]{.8}  & 75966 & 16108 & 74069 & 15553 & 430.5 & 87.07 & 87.25 & -- \\* 
\rowcolor[gray]{.8}  & 70375 & 15366 & 68292 & 14806 & 316.1 & 95.01 & 95.10 & -- \\*
\cline{2-9}
\rowcolor[gray]{.8}  & 79196 & 16216 & 76319 & 15397 & 294.8 & 91.14 & 90.52 & -- \\* 
\rowcolor[gray]{.8} \multirow{-4}{*}{\tt 1000\_50}  & 73931 & 15515 & 70916 & 14683 & 326.3 & 99.12 & 98.36 & -- \\
\hline

 & 11912 & 3134 & 11318 & 2965 & 68.67 & 10.42 & 10.32 & -- \\* 
 & 8649 & 1973 & 8140 & 1803  & 43.60 & 12.55 & 12.06 & -- \\*
\cline{2-9}
 & 13189 & 3363 & 12640 & 3216 & 28.96 & 11.57 & 11.51 & -- \\* 
\multirow{-4}{*}{\tt 500\_60}  & 9861 & 2216 & 9378 & 2063 & 32.74 & 13.90 & 13.47 & -- \\
\hline

\rowcolor[gray]{.8}  & 1205 & 360.6 & 1104 & 325.9   & -- & 0.931 & 0.935 & -- \\* 
\rowcolor[gray]{.8}  & 836.8 & 224.5 & 757.9 & 197.8 & -- & 1.15 & 1.09 & $\star$ \\*
\cline{2-9}
\rowcolor[gray]{.8}  & 1397 & 405.3 & 1318 & 379.4   & -- & 1.07 & 1.05 & -- \\* 
\rowcolor[gray]{.8} \multirow{-4}{*}{\tt 300\_65}  & 1008 & 264.4 & 942.9 & 242.4 & -- & 1.33 & 1.29 & -- \\
\hline

 & 65250 & 17831 & 62803 & 17193 & -- & 62.30 & 62.16 & -- \\* 
 & 50299 & 14163 & 47488 & 13239 & -- & 73.04 & 71.74 & -- \\*
\cline{2-9}
 & 72288 & 19074 & 72730 & 19320 & -- & 69.25 & 71.68 & -- \\* 
\multirow{-4}{*}{\tt 500\_65}  & 57809 & 15643 & 57619 & 15642 & -- & 80.83 & 82.29 & -- \\
\hline

\rowcolor[gray]{.8}  & 3726 & 1152 & 3067 & 932.2  & 25.07 & 3.04 & 2.67 & $\star$ \\* 
\rowcolor[gray]{.8}  & 2466 & 720.3 & 1925 & 519.3 & 13.08 & 3.79 & 3.25 & $\star$ \\*
\cline{2-9}
\rowcolor[gray]{.8}  & 3881 & 1156 & 4234 & 1284   & 7.04 & 3.21 & 3.58 & $\circ$ \\* 
\rowcolor[gray]{.8} \multirow{-4}{*}{\tt 300\_70}  & 2660 & 732.4 & 2917 & 831.1 & 7.88 & 4.00 & 4.39 & $\circ$ \\
\hline

 & 492014 & 143505 & 465416 & 135411 & 3562 & 521.4 & 512.1 & -- \\* 
 & 334038 & 97140 & 312075 & 89662  & 1677 & 597.0 & 582.6 & -- \\*
\cline{2-9}
 & 574019 & 162193 & 536902 & 151032 & 1427 & 610.6 & 594.2 & -- \\* 
\multirow{-4}{*}{\tt 500\_70}  & 412843 & 117343 & 381648 & 107110 & 1417 & 699.2 & 671.1 & -- \\
\hline

\rowcolor[gray]{.8}  & 1187 & 422.3 & 1172 & 424.5   & 13.40 & 0.971 & 0.976 & -- \\* 
\rowcolor[gray]{.8}  & 619.0 & 208.6 & 602.5 & 207.3 & 4.90 & 1.23 & 1.22 & -- \\*
\cline{2-9}
\rowcolor[gray]{.8}  & 1469 & 511.8 & 1458 & 516.2   & 2.59 & 1.18 & 1.20 & -- \\* 
\rowcolor[gray]{.8} \multirow{-4}{*}{\tt 200\_80}  & 833.0 & 274.4 & 820.7 & 275.8 & 2.75 & 1.54 & 1.55 & -- \\
\hline

 & 135194 & 46638 & 107192 & 36505 & 1377 & 143.5 & 120.7 & $\star$ \\* 
 & 69045 & 23123 & 52842 & 17118  & 429.4 & 161.3 & 132.8 & $\star$ \\*
\cline{2-9}
 & 169699 & 57512 & 152912 & 51628 & 296.6 & 180.1 & 170.4 & $\star$ \\* 
\multirow{-4}{*}{\tt 300\_80}  & 95146 & 31331 & 83660 & 27077 & 290.8 & 206.7 & 192.6 & $\star$ \\
\hline

\rowcolor[gray]{.8}  & 17527 & 6956 & 13913 & 5501 & 705.2 & 21.03 & 17.59 & $\star$ \\* 
\rowcolor[gray]{.8}  & 5738 & 2407 & 4493 & 1879  & 80.66 & 21.89 & 17.88 & $\star$ \\*
\cline{2-9}
\rowcolor[gray]{.8}  & 22919 & 9091 & 20974 & 8393 & 51.89 & 27.85 & 26.33 & $\star$ \\* 
\rowcolor[gray]{.8} \multirow{-4}{*}{\tt 200\_90}  & 9620 & 3885 & 8772 & 3573 & 41.38 & 32.52 & 30.39 & $\star$ \\
\hline

 & 13572 & 6072 & 8592 & 3778 & 1386 & 23.40 & 15.69 & $\star$ \\* 
 & 2811 & 1429 & 1666 & 836.1 & 64.31 & 18.99 & 12.06 & $\star$$\star$ \\*
\cline{2-9}
 & 13901 & 6515 & 9125 & 4211  & 126.6 & 25.00 & 17.25 & $\star$ \\* 
\multirow{-4}{*}{\tt 200\_95}  & 4563 & 2249 & 3016 & 1468 & 70.74 & 27.38 & 19.14 & $\star$ \\
\hline

\rowcolor[gray]{.8}  & 207249 & 101342 & 115876 & 55630 & 308379 & 868.2 & 513.8 & $\star$$\star$ \\* 
\rowcolor[gray]{.8}  & 16007 & 9236 & 9045 & 5174  & 2859 & 358.0 & 207.6 & $\star$$\star$ \\*
\cline{2-9}
\rowcolor[gray]{.8}  & 181164 & 97100 & 72846 & 38568   & -- & 752.8 & 329.9 & $\star$$\star$$\star$ \\* 
\rowcolor[gray]{.8} \multirow{-4}{*}{\tt 300\_98}  & 36249 & 21256 & 15543 & 9107 & -- & 926.6 & 401.8 & $\star$$\star$$\star$ \\
\hline

 & 9999 & 5460 & 5475 & 2867   & -- & 95.88 & 56.35 & $\star$$\star$ \\* 
 & 140.9 & 88.76 & 53.91 & 33.06 & 42.51 & 22.51 & 7.56 & $\star$$\star$$\star$ \\*
\cline{2-9}
 & 21386 & 15619 & 1055 & 723.0 & -- & 191.0 & 10.73 & $\star$$\star$$\star$ \\* 
\multirow{-4}{*}{\tt 500\_994}  & 710.0 & 512.7 & 113.7 & 81.42 & -- & 226.7 & 32.02 & $\star$$\star$$\star$ \\
\hline

\rowcolor[gray]{.8}  & 5002 & 3186 & 2491 & 1589 & -- & 153.8 & 78.99 & $\star$$\star$ \\* 
\rowcolor[gray]{.8}  & 9.61 & 7.40 & 6.04 & 4.49 & 50.14 & 12.82 & 9.46 & $\star$ \\*
\cline{2-9}
\rowcolor[gray]{.8}  & 13315 & 10803 & 5816 & 4841  & -- & 394.5 & 177.7 & $\star$$\star$$\star$ \\* 
\rowcolor[gray]{.8} \multirow{-4}{*}{\tt 1000\_998} & 185.8 & 151.2 & 92.64 & 75.04 & -- & 803.3 & 337.2 & $\star$$\star$$\star$ \\
\hline
\hline
\end{longtable}}

\section{Concluding Remarks}
\label{sec:conc}

In this paper, we propose a new Russian Dolls Search
algorithm, improving another implementation by \"Osterg\aa rd \cite{Oste02}
in several directions like the use of approximate colorings for subproblems 
pruning, an effective use of bit-level parallelism, and the application of an
enhanced elimination rule. These improvements allow the algorithm to further
reduce the running times of the faster previously published combinatorial algorithms in several 
instances. The computational experiments aiming at checking whether the proposed
algorithm is competitive with respect to the more efficient ones in the
literature were accomplished. Results show the effectiveness of the combination
of techniques employed in \Call{RDMC}{} for hard instances (graphs with a high density). In particular, for graphs
of density beyond 0.8, our algorithm is more than twice faster in
several graphs tested. These results show that, for some combinatorial
optimization problems, the Russian Dolls method can constitute a very interesting alternative to classical
Branch and Bound approaches.

\bibliographystyle{plain}
\bibliography{./clqdolls}

\end{document}